\documentclass[12pt, reqno]{amsart}
\usepackage{amsmath, amsthm, amscd, amsfonts, amssymb, graphicx, color}
\usepackage{graphicx}
\usepackage{epstopdf}
\usepackage{epsfig}
\usepackage{subfigure}
\usepackage[bookmarksnumbered, colorlinks, plainpages]{hyperref}
\allowdisplaybreaks[4]
\hypersetup{colorlinks=true,linkcolor=red, anchorcolor=green, citecolor=cyan, urlcolor=red, filecolor=magenta, pdftoolbar=true}

\newtheorem{theorem}{Theorem}[section]

\newtheorem{corollary}[theorem]{Corollary}
\theoremstyle{definition}
\newtheorem{definition}[theorem]{Definition}

\theoremstyle{remark}

\numberwithin{equation}{section}

\begin{document}

\setcounter{page}{1}

\title[Mutually-commuting von Neumann algebra models of quantum networks]{Mutually-commuting von Neumann algebra models of quantum networks and violation of Bell-type inequalities}

\author{Shuyuan Yang, Jinchuan Hou and Kan He}

\address[Shuyuan Yang]{Department of Mathematics, North University of China, TaiYuan, People's Republic of China}\address{School of Cyberspace Science and Technology, Beijing Institute of Technology, Beijing,
China}
 \email{yangshuyuan2000@163.com}

\address[Jinchuan Hou]{Department of Mathematics, Taiyuan University of Technology, TaiYuan, People's Republic of China.}

 \email{jinchuanhou@aliyun.com}

\address[Kan He]{Department of Mathematics, Taiyuan University of Technology, TaiYuan, People's Republic of China.}

 \email{hekan@tyut.edu.cn}


\begin{abstract}

Employing mutually-commuting von Neumann algebras to represent the algebra of observables on quantum systems provides a framework for studying quantum information theory in systems with infinite degrees of freedom and quantum field theory, yielding many profound results that differ from non-relativistic quantum systems. In this paper, we establish a mutually-commuting von Neumann algebra model of quantum networks with arbitrary structures. We derive Bell-type inequalities on this model, and determine various bounds for Bell-type inequalities based on the structure of underline von Neumann algebras, and identify the algebraic
structural conditions required for their violation. The conditions on the algebraic structure of observables for maximal violation of Bell-type inequalities, which we discovered in the context of von Neumann algebra models, can in turn guide the search for measurements in the non-relativistic setting.

\end{abstract} \maketitle

\section{\textbf{Introduction}}

Motivated by quantum field theory, which originates from the study of relativistic quantum mechanics, many novel quantum phenomena in systems with infinitely many degrees of freedom have been discovered \cite{HRDK,FK,ASR,MIPRE,CB1,STal,VLAS}. This differs from the non-relativistic quantum-mechanical setup, which is usually linked to type I von Neumann algebras and relies on the algebraic tensor product as its mathematical framework \cite{KR,LNP,SJ}. These are two distinct models, referred to respectively as the tensor product algebra (TPA) model and the mutually-commuting  von Neumann (observable) algebra (MCvNA) model. In the MCvNA model, there is, in general, no tensor product decomposition of the Hilbert space describing subsystems.

In the  TPA model, a bipartite composite quantum system is described by the tensor product of two Hilbert spaces $\mathcal{H}_\mathcal A$ and $\mathcal{H}_\mathcal B$, denoted as 
$\mathcal{H} = \mathcal{H}_\mathcal A \otimes \mathcal{H}_\mathcal B.$
Let $\mathcal{B}(\mathcal{H})$ denote the algebra of all bounded linear operators on $\mathcal{H}$, and $\mathcal{T}(\mathcal{H})$ denote the space of trace-class operators. Then there is a duality relation  $\mathcal{B}(\mathcal{H}) = \mathcal{T}(\mathcal{H})^*$. Therefore,  
a quantum state $\rho$ is a positive semidefinite operator with trace 1 in $ \mathcal{T}(\mathcal{H})$.  
The algebra of observables is the Jordan algebra of self-adjoint operators on $\mathcal B(\mathcal{H}).$ 
However, it should be pointed out that relying solely on the TPA model to discuss quantum information problems has two drawbacks \cite{RLJ,FCK,WE}:
1) it fails to provide a universal framework for accurately describing phenomena in systems with infinite degrees of freedom;
2) it is not suitable for studying traditional quantum physics such as quantum field theory, which requires the language of type III von Neumann algebras. Research on quantum information problems on von Neumann algebras has received significant attention and yielded many meaningful results from a mathematical perspective \cite{Yinzhi1,Gaoli1,Gaoli2,BDE1,KJD1,MM1,CKL1,JL1, wujinsong1, wujinsong2}. 
In MCvNA model, the algebra of observables of  quantum systems is described by a von Neumann algebra $\mathcal{M}$, with $\mathcal{M}_\mathcal A$ and $\mathcal{M}_\mathcal B$ being two mutually commuting von Neumann subalgebras of $\mathcal{M}$ such that $(\mathcal{M}_\mathcal A \vee \mathcal{M}_\mathcal B)''=\mathcal{M}$.
Here, $\mathcal{M}''$ denotes the double commutant of $\mathcal{M}$. However, in more general cases, a natural question arises: are the two models isomorphic? In other words, does $\mathcal{M}$ admit an isomorphism to $\mathcal{M}_\mathcal A \otimes \mathcal{M}_\mathcal B$? This question is intimately related to the famous Tsirelson problem, which is equivalent to the Connes embedding conjecture in von Neumann algebra theory \cite{MIPRE,JMe,ON}.
The Tsirelson problem asks whether the sets of joint probability distributions of measurement outcomes are the same in the two models. In 2020, Ji et al. \cite{MIPRE} designed a quantum interactive protocol based on the classical halting problem in computability theory, and proved that the answer to the Tsirelson problem is negative. This means that the TPA model and the MCvNA model are not equivalent. Therefore, the MCvNA model provides a more general framework.

In  non-relativistic quantum mechanics, Bell nonlocality demonstrates that local measurements performed on one subsystem of a quantum state can instantaneously influence the measurement outcomes on another subsystem, regardless of the spatial separation between them \cite{SCA, JSB, JS1}. Such nonlocal correlations can be detected through Bell inequalities, which serve as constraints that all local correlations must obey \cite{ABN, CJL, Guo, CG, HSA}. It has been demonstrated to offer quantum advantages in various device-independent quantum information tasks, including communication complexity \cite{RH}, quantum key distribution \cite{JLA, LSA}, randomness amplification \cite{S, RAA}, and measurement-based quantum computation \cite{RH1, RDH}. In the early 1980s, Summers et al. first introduced the maximal violation of Bell inequalities and prove that its value is bounded by 2$\sqrt{2}$ in the MCvNA model, with equality attainable iff each subalgebra contains a copy of $\mathcal{M}_2({\Bbb C})$ \cite{summer1}. This shows that Bell nonlocality is not merely a quantum peculiarity but a structural feature encoded in the classification of operator algebras, providing rigorous tools to quantify non-classical correlations in relativistic quantum systems \cite{SR}. Translating these bounds into the vacuum representation of algebraic quantum field theory, they show that tangent wedge algebras are always maximally correlated, whereas strictly spacelike-separated wedges decay exponentially with mass-governed distance \cite{summer2,summer3,summer4,summer5,summer6}.  
These works reveal a novel algebraic invariant, termed the Bell correlation invariant, which distinguishes infinitely many isomorphism classes of pairs of mutually commuting von Neumann algebras and links the maximal violation to the occurrence of the hyperfinite type $\rm II_1$ factor \cite{SR}. This is a pioneering work to make Bell nonlocality in quantum field theory (QFT) serve as a crucial bridge connecting quantum information science with fundamental physics \cite{SR,SBA}. It provides a rigorous framework for reconciling quantum entanglement with relativistic causality, resolves conceptual challenges such as impossible measurements, and reveals how fundamental symmetries like parity violation affect quantum correlations \cite{DHL,FCR,NY,KSA,MY}.

In contrast to entanglement originating from an individual source, quantum networks comprise numerous small-scale entangled states. Owing to the independence among distinct sources, the correlations emerging from quantum networks exhibit non-convex characteristics that transcend the polytopes associated with single-source entanglement \cite{nc1, nc2, nc3, nc4, nc5,Tava}. To date, Bell-like inequalities in the non-relativistic quantum mechanics have been devised to certify nonlocal correlations across diverse network architectures, such as entanglement-swapping networks \cite{nc1, 5}, chain configurations \cite{8, AMID}, star networks \cite{9, 10}, polygon structures \cite{11, 13,npj2}, tree-shaped networks \cite{14, 15, 16}, arbitrary acyclic networks \cite{nc3, nc4, any3}, and arbitrary $k$-independent networks \cite{12}. Alternative research directions examine the stronger forms of network nonlocality that surpass hybrid implementations involving classical variables and post-quantum resources \cite{fnn, LYK}. Nevertheless, limited progress has been made concerning the discrimination of correlations produced by different networks and the subsequent identification of underlying quantum network topologies \cite{npj1}.
Recently, the notion of bilocality in an  entanglement swapping network based on the MCvNA model has already been introduced by Ligthart L. T., et al \cite{LLD} and  Renou et. al \cite{LLD1}, and Xu has addressed the inclusion problem between TPA model  and MCvNA model in this setting \cite{XuXiang}. However, the MCvNA model of arbitrary networks and Bell-type inequalities on them have not yet been established now. In this paper, we aim to establish the MCvNA model of quantum networks with arbitrary structures and Bell-type inequalities on this model, and show its distinction from the non-relativistic case.

\section{Mutually-commuting von Neumann algebra models of quantum networks}

In quantum field theory, the observables performed by each party always correspond to a von Neumann algebra. We assume that party ${\bf A}_i$ corresponds to the algebra $\mathcal M_{A_i}$ for any $i\in\{1,2,...,m\}$. 
Building on this foundation, we have established the following mutually commuting von Neumann algebra model.

\begin{definition} {\bf (Mutually commuting von Neuamnn algebra model of multipartite quantum systems)} 
Let $\mathcal{M}_{A_i}$ be von Neumann subalgebra of $\mathcal B(H)$ over some Hilbert space $H$ for any $i\in\{1,2,...,m\}$, and they are mutually commuting, i.e., $\mathcal M_{A_i}\subset \mathcal M_{A_j}^\prime $ with $i\neq j\in \{1,2,...,m\}$, where $\mathcal M_{A_i}^\prime$ is the commutant of $\mathcal M_{A_i}$.  The generated von Neumann algebra $$\mathcal M_{A_1A_2\cdots A_m}=(\mathcal M_{A_1} \vee \mathcal M_{A_2}\vee \cdots \vee\mathcal M_{A_m})^{\prime\prime}.$$ 
We refer to the above model as the {\bf MCvNA} model of multipartite quantum systems. When $\mathcal M_{A_1A_2\cdots A_m}\simeq \mathcal M_{A_1} \otimes \mathcal M_{A_2} \otimes \cdots \otimes \mathcal M_{A_m}$, 
it is called the {\bf tensor product algebra model}. For any $A_{i}\in \mathcal M_{A_i}$ with $i\in\{1,2,...,m\}$, we always assume it is a self-adjoint element.
\end{definition}

Next, we define the mutually commuting von Neuamnn algebra model of quantum networks. Before that, let us review some basic concepts in non-relativistic quantum mechanics.  In a network $\Xi(n,m)$ with $m$ parties  ${\bf A}_1,{\bf A}_2,...,{\bf A}_m$ and $n$ sources, with the maximal independent number $h_{\rm max}$.  The independence number $h$ satisfies $2\leq h\leq h_{\rm max}$. In a quantum network, the independence number $h$ refers to the existence of $h$ parties that do not share the same sources, meaning that there must be no shared sources among these $h$ parties.  Let $\Lambda_i$ denote the set of hidden variables received by party ${\bf A}_i$. For each independence number $h$, there may be multiple choices for the set of corresponding independent parties. We denote it as $\Gamma(n,m,h^{D_h})$, which means the independence number $h$ corresponds to $D_h$ different sets of independent parties. We call \(D_{h}\) the degree of repetition.  For the purpose of illustration, a specific example is used below to explain the concept of independence of parties in a network.  For the quantum network shown in Fig. \ref{EX1}, which consists of 5 parties and 4 sources, it can be seen that parties ${\bf A}_1$, ${\bf A}_3$ and ${\bf A}_5$ do not share the same source. Therefore, the independence number of this network is 3. Obviously, there is no situation in this network where four parties do not share the same source, so the maximum independence number of this network is 3, i.e., $h_{\max}=3$. If $h=2$, then we have $D_2=6$ and $\Gamma(n,m,2^{D_2})=\{\{1,3\},\{1,4\},\{1,5\},\{2,4\},\{2,5\},\{3,5\}\}$.  We call $D_h$ is the degree of repetition of the indenpendence number $h$.

\begin{figure}[h]
  \centering
{\includegraphics[width=3.5in]{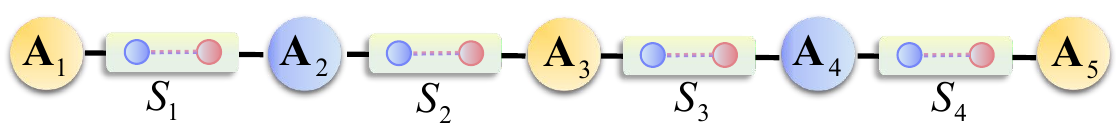}}
  \caption{  A network  with $h_{\rm max}=3$
   }\label{EX1}
\end{figure}

\begin{definition}\label{def2.2} {\bf (Mutually commuting von Neuamnn algebra models of quantum networks)}  For the network $\Xi(n,m)$ with $m$ parties  ${\bf A}_1,{\bf A}_2,...,{\bf A}_m$ and $n$ sources, with the maximal independent number $h_{\rm max}$, the mutually commuting von Neumann algebra model of $\Xi(n,m)$ is the multi-tuple $({\mathcal M_{A_i}}, \tau)$ satisfies the following conditions: \\
(1) $\{{\mathcal M_{A_i}}\}$ is a mutually commuting von Neuamnn algebra model of a $m$-partite quantum system;\\
(2) $\tau\in \mathcal M_{A_1A_2\cdots A_m}^*$ with $\mathcal M_{A_1A_2\cdots A_m}=(\mathcal M_{A_1} \vee \mathcal M_{A_2}\vee \cdots \vee\mathcal M_{A_m})^{\prime\prime}$, and satisfies $$\tau(A_{r_1}A_{r_2}\cdots A_{r_h})=\tau(A_{r_1})\tau(A_{r_2})\cdots \tau(A_{r_h}),$$
for all $(A_{r_1}, A_{r_2}, \cdots A_{r_h})\in \mathcal{M}_{A_{r_1}}\times \mathcal{M}_{A_{r_2}} \cdots \times\mathcal{M}_{A_{r_h}}$. Where  $2\leq h\leq h_{\rm max}$ and $\{r_1, r_2, \cdots, r_h\}\in\Gamma(n,m,h^{D_h})$.

We refer to the above model as the {\bf MCvNA} model of quantum networks.
\end{definition}
We call $\tau$ the network state corresponding to the network $\Xi(n,m)$, and this network state is not unique. For the network shown in Fig. \ref{EX1},  if $h=h_{\rm max}=3$, then the condition (2) in Definition \ref{def2.2}  can be concretized as $\tau\in \mathcal M_{A_1A_2\cdots A_5}^*$, and satisfies
\begin{align}
\tau(A_1A_3A_5)=\tau(A_1)\tau(A_3)\tau(A_5)
\end{align}
to be satisfied, where $A_1\in\mathcal{M}_{A_1}$, $A_3\in\mathcal{M}_{A_3}$ and 
$A_5\in\mathcal{M}_{A_5}$.

We point out that the von Neumann algebra model defined above is intended for the study of algebraic structural properties. Therefore, we have adopted the most general definition, which encompasses but does not correspond to models of quantum mechanics or quantum field theory. In the following, we establish a Bell-type inequality for the $h$-independent networks based on this model.

\section{Bell-type inequalities and their violations}

For the non-relativistic case, in the network $\Xi(n,m)$, if each party performs measurements with binary inputs and binary outputs, we can always verify that the network can generate network nonlocal correlations by demonstrating the violation of a Bell-type inequality. 
Specifically, suppose party ${\bf A}_i$ performs observables $A_{i,0}$ and $A_{i,1}$ respectively. Then network nonlocal correlations can always be detected by violating the following Bell-type inequality\cite{12} 
\begin{align}\label{S1}
\mathcal S=|I|^{\frac{1}{h}}+|J|^{\frac{1}{h}}\le 2,
\end{align}
where
$$I=\left\langle \prod_{i\in\{r_1,r_2,\dots,r_h\}}(A_{i,0}+A_{i,1})\prod_{j\in\{1,2,\dots,m\}\setminus\{r_1,r_2,\dots,r_h\}}A_{j,0}\right\rangle,$$
and $$J=\left\langle\prod\limits_{i\in\{r_1,r_2,\dots,r_h\}}(A_{i,0}-A_{i,1})\prod_{j\in\{1,2,\dots,m\}\setminus\{r_1,r_2,\dots,r_h\}}A_{j,1}\right\rangle.$$ 
Here, $h$ represents the independence number of the network, i.e., there exist $h$ parties in the network that do not share common sources. Meanwhile, it is assumed that the set of indicators for independent parties is $\{r_1,r_2,\dots,r_h\}$. Ref. \cite{12} also indicates that the maximum quantum violation is $2\sqrt{2}$.

In the MCvNA model  of the network $\Xi(n,m)$ with the maximal independent number $h_{\rm max}$, let $A_{i,{x_i}} \in \mathcal{M}_{A_i}$ be an observable  with spectrum ${1, -1}$. Thus, $A_{i,{x_i}}=\sum_{a_i=0}^1(-1)^{a_i}A_{i,a_i|x_i}$ with $a_i\in\{0,1\}$. We define
$$I_\tau=\tau\left(\prod_{i\in\{r_1,r_2,\dots,r_h\}}(A_{i,0}+A_{i,1})\prod_{j\in\{1,2,\dots,m\}\setminus\{r_1,r_2,\dots,r_h\}}A_{j,0}\right),$$
and $$J_\tau=\tau\left(\prod\limits_{i\in\{r_1,r_2,\dots,r_h\}}(A_{i,0}-A_{i,1})\prod_{j\in\{1,2,\dots,m\}\setminus\{r_1,r_2,\dots,r_h\}}A_{j,1}\right),$$ 
and 
\begin{eqnarray}\label{biin}
     \mathcal S_{\tau}=|I_\tau|^{\frac{1}{h}}+|J_\tau|^{\frac{1}{h}},
\end{eqnarray}
where $\{r_1,r_2,...,r_h\}\in\Gamma(n,m,h^{D_h})$ with $2\leq h\leq h_{\rm max}$ and
$$\tau(\prod_{i\in\{r_1,r_2,\dots,r_h\}}A_{i,{x_i}}\prod_{j\in\{1,2,\dots,m\}\setminus\{r_1,r_2,\dots,r_h\}}A_{j,{x_j}})=\sum_{a_1,a_2,...,a_m}(-1)^{\sum a_i}\tau(\Pi_{i=1}^mA_{i,a_i|x_i}).$$
The network correlation  $\hat{p}=p(a_1a_2\cdots a_m|x_1x_2\cdots x_m)$ can be defined as 
$$p(a_1a_2\cdots a_m|x_1x_2\cdots x_m)=\tau(\Pi_{i=1}^mA_{i,a_i|x_i}),$$
where  $x_i$ and $a_i$ denote the input and output of the measurement operator $A_{i,a_i|x_i}$, respectively, with $x_i,a_i\in{0,1}$ for each $i\in{1,\dots,m}$.

It should be noted that, in quantum mechanics, the maximal quantum violation of Bell-type inequality is $2\sqrt{2}$. In the MCvNA model of the network $\Xi(n,m)$, the supremum of $\mathcal S_{\tau}$ defined in Eq. (\ref{biin}) is $2\sqrt{2}$, coinciding with that in non-relativistic quantum mechanics. This result is proved in the following theorem.

\begin{theorem}\label{the3.1}
For the mutually commuting von Neumann algebra models of the quantum network $\Xi(n,m)$  with the maximal independent number $h_{\rm max}$, we always have
\begin{align}\label{eq3.2}
|I_\tau|^{\frac{1}{h}}+|J_\tau|^{\frac{1}{h}}\le2\sqrt{2},
\end{align}
where $2\leq h\leq h_{\rm max}$.
\end{theorem}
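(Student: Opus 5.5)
The plan is to use the product structure of $\tau$ on the independent parties, then Hölder's inequality, and finally the Cauchy--Schwarz inequality applied to the two-point bounds $|\tau(A_{i,0}\pm A_{i,1})|$. I take $\tau$ to be a state: positive and normalized, as implicit in the definition of the network correlation $\hat p$. Then $|\tau(X)|\le\|X\|$ and $\|A_{i,x}\|\le 1$.

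\textbf{Step 1: reduce to the independent parties.} Write $X_i^{\pm}=A_{i,0}\pm A_{i,1}$ for $i\in\{r_1,\dots,r_h\}$. By condition (2) of Definition \ref{def2.2}, $\tau$ factorizes on products over the independent index set, so one would like
\[
I_\tau=\prod_{k=1}^h\tau(X_{r_k}^+)\cdot(\text{rest}).
\]
Definition \ref{def2.2} does not explicitly grant this when non-independent factors $A_{j,0}$ are present. So the first task is to control those factors. I would show that $|\tau(Y\prod_k X^{+}_{r_k})|$ is bounded by a product of terms of the form $\tau(|X^+_{r_k}|)$ or $\tau((X^+_{r_k})^2)^{1/2}$, where $Y$ is a product of commuting contractions.

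For this I would use the GNS representation $(\pi,\xi)$ of $\tau$ together with mutual commutativity. Write the operator as $Y\,U\prod_k|X_{r_k}^+|$ with $U=\prod_k \mathrm{sgn}(X_{r_k}^+)$, a product of commuting self-adjoint contractions. Cauchy--Schwarz then gives
\[
|\tau(\cdot)|\le \tau\Bigl(\prod_k|X^+_{r_k}|\Bigr)^{1/2}\,\tau\Bigl(\prod_k|X^+_{r_k}|\,V^*V\Bigr)^{1/2}.
\]
The first factor factorizes by condition (2). This step, handling the non-independent parties' observables, is the main obstacle. If the Cauchy--Schwarz route leaves an uncontrolled second factor, I would fall back on reading condition (2) as independence of the marginal on the independent parties jointly with arbitrary bounded functions. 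Under that reading the inequality
\[
|I_\tau|\le\prod_k \tau(|X^+_{r_k}|)
\]
follows directly, because $\|A_{j,0}\|\le1$ can be absorbed into a positive functional dominated by $\tau$ in the ordering $-\prod|X|\le \cdot\le \prod|X|$.

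\textbf{Step 2: apply Hölder.} Suppose we have $|I_\tau|\le\prod_k a_k$ and $|J_\tau|\le\prod_k b_k$, with $a_k=\tau(|X^+_{r_k}|)$ and $b_k=\tau(|X^-_{r_k}|)$. Hölder's inequality for sums of products of $h$ nonnegative terms gives
\[
|I_\tau|^{1/h}+|J_\tau|^{1/h}\le \prod_k a_k^{1/h}+\prod_k b_k^{1/h}\le \prod_k (a_k+b_k)^{1/h}.
\]

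\textbf{Step 3: single-site bound.} It remains to show $a_k+b_k\le 2\sqrt2$. By Cauchy--Schwarz,
\[
a_k+b_k\le \sqrt2\,\bigl(\tau((X^+)^2)+\tau((X^-)^2)\bigr)^{1/2}.
\]
Since $A_{i,x}^2=1$, we have $(X^+)^2+(X^-)^2=2(A_{i,0}^2+A_{i,1}^2)=4$. Hence $a_k+b_k\le\sqrt2\cdot 2=2\sqrt2$.

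Combining the steps gives $\mathcal S_\tau\le (2\sqrt2)^{h/h}=2\sqrt2$, which is \eqref{eq3.2}.
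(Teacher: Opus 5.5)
Your proof is correct. The step you flag as the main obstacle already closes with what you wrote, so drop the fallback: it is not needed, and adopting it would change the hypothesis. Take $V=YU$. It is a self-adjoint contraction commuting with $P=\prod_k|X^+_{r_k}|$, so the second Cauchy--Schwarz factor satisfies $\tau(PV^*V)=\tau(P^{1/2}V^*VP^{1/2})\le\tau(P)$. Equivalently, $-P\le VP=P^{1/2}VP^{1/2}\le P$, and positivity of $\tau$ gives $|I_\tau|\le\tau(P)=\prod_k\tau(|X^+_{r_k}|)$. This uses condition (2) of Definition~\ref{def2.2} exactly as stated, applied to the positive elements $|X^+_{r_k}|\in\mathcal M_{A_{r_k}}$.

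The paper takes a different route at the first and last steps and never passes to absolute values.
\begin{itemize}
\item It applies Cauchy--Schwarz in the GNS space directly to $\langle\pi_\tau(Y)\Omega,\pi_\tau(\prod_k X^+_{r_k})\Omega\rangle$, with $\|\pi_\tau(Y)\Omega\|\le1$. This gives the second-moment bound $|I_\tau|\le\tau\bigl(\prod_k(X^+_{r_k})^2\bigr)^{1/2}$.
\item With $a=\tau(\prod_k(X^+_{r_k})^2)^{1/h}$ and $b=\tau(\prod_k(X^-_{r_k})^2)^{1/h}$, factorization and the same H\"older step give $a+b\le\prod_k\bigl(\tau((X^+_{r_k})^2)+\tau((X^-_{r_k})^2)\bigr)^{1/h}\le4$. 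The trigonometric estimate in the paper is a redundant second proof of this.
\item The $\sqrt2$ then enters globally via $(\sqrt a+\sqrt b)^2\le2(a+b)\le8$.
\end{itemize}
That route avoids polar decomposition and functional calculus.

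Your route instead puts the $\sqrt2$ into a per-site Tsirelson-type bound $\tau(|X^+|)+\tau(|X^-|)\le2\sqrt2$. It also yields a sharper intermediate inequality, $\mathcal S_\tau\le\prod_k\bigl(\tau(|X^+_{r_k}|)+\tau(|X^-_{r_k}|)\bigr)^{1/h}$. This buys something real. If $A_{i,0}$ and $A_{i,1}$ commute, the scalar identity $|s+t|+|s-t|=2\max(|s|,|t|)$, applied in the joint functional calculus, gives $|X^+_i|+|X^-_i|\le2I$. So each abelian site contributes a factor at most $2$, and Theorem~\ref{the3.2} follows from the same estimate. The paper's second-moment bound cannot detect commutativity, since $\tau((X^+_i)^2)+\tau((X^-_i)^2)=4$ for $\pm1$-valued observables whether or not they commute.
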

\begin{proof}
According to Gelfand-Namark-Segal (GNS) construction, there is a $^*$-representation $\pi_\tau: 
 \mathcal M_{A^1A^2\cdots A^m}\rightarrow \mathcal B(\mathcal H_\tau)$ and a cyclic vector $\Omega\in \mathcal B(\mathcal H_\tau)$ such that the set $\{\pi_\tau(R)\Omega:\  R\in \mathcal M_{A^1A^2\cdots A^m}\}$ is dense in $H_\tau$.
We have
\begin{align}
&\mathcal S_{\tau}=|I_\tau|^{\frac{1}{h}}+|J_\tau|^{\frac{1}{h}} \nonumber \\
=&\Biggl[ \Biggl|\tau\Biggl(\prod_{i\in\{r_1,r_2,\dots,r_h\}}(A_{i,0}+A_{i,1})\prod_{j\in\{1,2,\dots,m\}\setminus\{r_1,r_2,\dots,r_h\}}A_{j,0}\Biggr) \Biggr|
\Biggr]^{\frac{1}{h}} \nonumber \\
&+\Biggl[\Biggl|\tau\Biggl(\prod\limits_{i\in\{r_1,r_2,\dots,r_h\}}(A_{i,0}-A_{i,1})\prod_{j\in\{1,2,\dots,m\}\setminus\{r_1,r_2,\dots,r_h\}}A_{j,1}\Biggr)\Biggr|\Biggr]^{\frac{1}{h}} \nonumber\\
=& \Biggl|\Biggl\langle \pi_{\tau}\Biggl(\prod_{j\in\{1,2,\dots,m\}\setminus\{r_1,r_2,\dots,r_h\}}A_{j,0}\Biggr)\Omega, \pi_{\tau}\Biggl(\prod_{i\in\{r_1,r_2,\dots,r_h\}}(A_{i,0}+A_{i,1})\Biggr)\Omega
\Biggr\rangle\Biggr|^{\frac{1}{h}}
\nonumber \\
&+\Biggl|\Biggl\langle \pi_{\tau}\Biggl(\prod_{j\in\{1,2,\dots,m\}\setminus\{r_1,r_2,\dots,r_h\}}A_{j,1}\Biggr)\Omega, \pi_{\tau}\left(\prod_{i\in\{r_1,r_2,\dots,r_h\}}(A_{i,0}-A_{i,1})\right)\Omega
\Biggr\rangle \Biggr|
^{\frac{1}{h}} \nonumber \\
\le& \Biggl|\Biggl|\pi_{\tau}\Biggl(\prod_{j\in\{1,2,\dots,m\}\setminus\{r_1,r_2,\dots,r_h\}}A_{j,0}\Biggr)\Omega\Biggr|\Biggr|^{\frac{1}{h}}\,\Biggl|\Biggl| \pi_{\tau}\Biggl(\prod_{i\in\{r_1,r_2,\dots,r_h\}}(A_{i,0}+A_{i,1})\Biggr)\Omega \Biggr|\Biggr|^{\frac{1}{h}}
\nonumber \\
&+\Biggl|\Biggl| \pi_{\tau}\Biggl(\prod_{j\in\{1,2,\dots,m\}\setminus\{r_1,r_2,\dots,r_h\}}A_{j,1}\Biggr)\Omega\Biggr|\Biggr|^{\frac{1}{h}}\,\Biggl|\Biggl| \pi_{\tau}\Biggl(\prod_{i\in\{r_1,r_2,\dots,r_h\}}(A_{i,0}-A_{i,1})\Biggr)\Omega
\Biggr|\Biggr|^{\frac{1}{h}}  \label{E1} \\
\le&\Biggl|\Biggl| \pi_{\tau}\Biggl(\prod_{i\in\{r_1,r_2,\dots,r_h\}}(A_{i,0}+A_{i,1})\Biggr)\Omega \Biggr|\Biggr|^{\frac{1}{h}}+
\Biggl|\Biggl| \pi_{\tau}\Biggl(\prod_{i\in\{r_1,r_2,\dots,r_h\}}(A_{i,0}-A_{i,1})\Biggr)\Omega
\Biggr|\Biggr|^{\frac{1}{h}} \label{E2} \\
=& \tau\Biggl(\prod_{i\in\{r_1,r_2,\dots,r_h\}}(A_{i,0}+A_{i,1})^2\Biggr)^{\frac{1}{2h}}+\tau\Biggl(\prod_{i\in\{r_1,r_2,\dots,r_h\}}(A_{i,0}-A_{i,1})^2\Biggr)^{\frac{1}{2h}}.\label{eq3.3}
\end{align}

Take $F_{\tau}=\tau(\prod_{i\in\{r_1,r_2,\dots,r_h\}}(A_{i,0}+A_{i,1})^2)^{\frac{1}{2h}}+\tau(\prod_{i\in\{r_1,r_2,\dots,r_h\}}(A_{i,0}-A_{i,1})^2)^{\frac{1}{2h}}$. Thus, one can obtain
\begin{align}
F_{\tau}^2=&\tau\Biggl(\prod_{i\in\{r_1,r_2,\dots,r_h\}}(A_{i,0}+A_{i,1})^2\Biggr)^{\frac{1}{h}}+\tau\Biggl(\prod_{i\in\{r_1,r_2,\dots,r_h\}}(A_{i,0}-A_{i,1})^2\Biggr)^{\frac{1}{h}} \nonumber \\
&+2\tau\Biggl(\prod_{i\in\{r_1,r_2,\dots,r_h\}}(A_{i,0}+A_{i,1})^2\Biggr)^{\frac{1}{2h}}\tau\Biggl(\prod_{i\in\{r_1,r_2,\dots,r_h\}}(A_{i,0}-A_{i,1})^2\Biggr)^{\frac{1}{2h}}.
\end{align}
It  follows from the positivity property of $\tau$ that we have
\begin{align}
&\tau\Biggl(\prod_{i\in\{r_1,r_2,\dots,r_h\}}(A_{i,0}+A_{i,1})^2\Biggr)^{\frac{1}{h}} \nonumber \\=&\tau\Biggl(\prod_{i\in\{r_1,r_2,\dots,r_h\}}[(A_{i,0})^2+(A_{i,1})^2+A_{i,0}A_{i,1}+A_{i,1}A_{i,0}]\Biggr)^{\frac{1}{h}} \nonumber \\
\le&\tau\Biggl(\prod_{i\in\{r_1,r_2,\dots,r_h\}}(2I+A_{i,0}A_{i,1}+A_{i,1}A_{i,0})\Biggr)^{\frac{1}{h}}, \label{g35}
\end{align}
\begin{align}
&\tau\Biggl(\prod_{i\in\{r_1,r_2,\dots,r_h\}}(A_{i,0}-A_{i,1})^2\Biggr)^{\frac{1}{h}} \nonumber \\=&\tau\Biggl(\prod_{i\in\{r_1,r_2,\dots,r_h\}}[(A_{i,0})^2+(A_{i,1})^2-A_{i,0}A_{i,1}-A_{i,1}A_{i,0}]\Biggr)^{\frac{1}{h}} \nonumber \\
\le&\tau\Biggl(\prod_{i\in\{r_1,r_2,\dots,r_h\}}(2I-A_{i,0}A_{i,1}-A_{i,1}A_{i,0})\Biggr)^{\frac{1}{h}}, \label{g36}
\end{align}

By setting $2+\tau(A_{i,0}A_{i,1}+A_{i,1}A_{i,0})=4{\rm sin}^2\theta_i$ with $\theta_i\in[0,\pi]$, we obtain that $2-\tau(A_{i,0}A_{i,1}+A_{i,1}A_{i,0})=4{\rm cos}^2\theta_i$, $i\in\{r_1,r_2,\dots,r_h\}$.
From Ineqs. (\ref{g35}) and (\ref{g36}) we get that 
\begin{align}
&\tau\Biggl(\prod_{i\in\{r_1,r_2,\dots,r_h\}}(A_{i,0}+A_{i,1})^2\Biggr)^{\frac{1}{h}}+\tau\Biggl(\prod_{i\in\{r_1,r_2,\dots,r_h\}}(A_{i,0}-A_{i,1})^2\Biggr)^{\frac{1}{h}} \nonumber \\
=&\Biggl(\prod_{i\in\{r_1,r_2,\dots,r_h\}}4{\rm sin}^2\theta_i\Biggr)^{\frac{1}{h}}+\Biggl(\prod_{i\in\{r_1,r_2,\dots,r_h\}}4{\rm cos}^2\theta_i\Biggr)^{\frac{1}{h}} \nonumber \\
=&4\Biggl[ \Biggl(\prod_{i\in\{r_1,r_2,\dots,r_h\}}{\rm sin}\theta_i\Biggr)^{\frac{2}{h}}+\Biggl(\prod_{i\in\{r_1,r_2,\dots,r_h\}}{\rm cos}\theta_i\Biggr)^{\frac{2}{h}}
\Biggr] \nonumber \\
=&4\Biggl[ \Biggl(\prod_{i\in\{r_1,r_2,\dots,r_h\}}{\rm sin}\theta_i\Biggr)^{\frac{2}{h}}+\Biggl(\prod_{i\in\{r_1,r_2,\dots,r_h\}}{\rm sin}(\frac{\pi}{2}-\theta_i)\Biggr)^{\frac{2}{h}}
\Biggr] \nonumber \\
\le&4\Biggl[ {\rm sin}^2(\frac{1}{h}\sum_{i}\theta_i)+
 {\rm sin}^2(\frac{1}{h}\sum_{i}(\frac{\pi}{2}-\theta_i))
\Biggr] \label{sin}\\
=&4\Biggl[ {\rm sin}^2(\frac{1}{h}\sum_{i}\theta_i)+
 {\rm sin}^2(\frac{\pi}{2}-\frac{1}{h}\sum_{i}\theta_i)
\Biggr] \nonumber \\
=&4\Biggl[ {\rm sin}^2(\frac{1}{h}\sum_{i}\theta_i)+
 {\rm cos}^2(\frac{1}{h}\sum_{i}\theta_i)
\Biggr] \nonumber \\
=&4. \label{g38}
\end{align}
Here, the reason why the Ineq. $(\ref{sin})$ holds here is that for any $\theta_{r_1}, \theta_{r_2},...,\theta_{r_h}$ and integer $2\le h$, inequality $(\prod_{i\in\{r_1,r_2,\dots,r_h\}}{\rm sin}\theta_i)^{\frac{1}{h}}\le{\rm sin}(\frac{1}{h}\sum_{i\in\{r_1,r_2,\dots,r_h\}}\theta_i)$ holds \cite{12}.

It follows from the H\"older inequality 
that we get
\begin{align}
&  2\tau\Biggl(\prod_{i\in\{r_1,r_2,\dots,r_h\}}(A_{i,0}+A_{i,1})^2\Biggr)^{\frac{1}{2h}}\tau\Biggl(\prod_{i\in\{r_1,r_2,\dots,r_h\}}(A_{i,0}-A_{i,1})^2\Biggr)^{\frac{1}{2h}}\nonumber \\
\le&\tau\Biggl(\prod_{i\in\{r_1,r_2,\dots,r_h\}}(A_{i,0}+A_{i,1})^2\Biggr)^{\frac{1}{h}}+\tau\Biggl(\prod_{i\in\{r_1,r_2,\dots,r_h\}}(A_{i,0}-A_{i,1})^2\Biggr)^{\frac{1}{h}} \label{E3} \\
\le& \prod_{i\in\{r_1,r_2,\dots,r_h\}}(\tau[(A_{i,0}+A_{i,1})^2]+\tau[(A_{i,0}-A_{i,1})^2])^{\frac{1}{h}}
\nonumber\\
=& \prod_{i\in\{r_1,r_2,\dots,r_h\}}(2\tau[(A_{i,0})^2+(A_{i,1})^2])^{\frac{1}{h}}
\nonumber \\
\le&\prod_{i\in\{r_1,r_2,\dots,r_h\}}4^{\frac{1}{h}} \label{E7}\\
=&4 \label{g34e}
\end{align}
where the final inequality follows from the fact that the spectra of $A_{i,0}$ and $A_{i,1}$ are $-1$ and $1$, together with the positivity property of $\tau$.

Combining the Eqs. (\ref{g38}) and (\ref{g34e}), then we have 
\begin{align}
|I_\tau|^{\frac{1}{h}}+|J_\tau|^{\frac{1}{h}} \le \sqrt{F_{\tau}^2}\le\sqrt{4+4}=2\sqrt{2}.
\end{align}
\end{proof}

Below, for the a given quantum network $\Xi(n,m)$ with  the maximal independent number $h_{\rm max}$, we demonstrate how the value of $\mathcal S_{\tau}$ depends on the commutativity of the algebra families $\mathcal M_{A_1}, \mathcal M_{A_2},...,\mathcal M_{A_m}$ in the MCvNA model of the network $\Xi(n,m)$.

\begin{theorem}\label{the3.2}
For the mutually commuting von Neumann algebra models of the quantum network $\Xi(n,m)$  with the maximal independent number $h_{\rm max}$, if $\mathcal M_{A_{r_1}}$, $ \mathcal   M_{A_{r_2}},...,\mathcal M_{A_{r_h}}$ are Abelian, then 
\begin{align} \label{g315}
|I_\tau|^{\frac{1}{h}}+|J_\tau|^{\frac{1}{h}}\le2,
\end{align}
where $\{r_1,r_2,...,r_h\}\in\Gamma(n,m,h^{D_h})$ and $2\leq h \leq h_{\rm max}$.
\end{theorem}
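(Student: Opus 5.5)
The plan is to mimic the classical (local hidden variable) proof of the bound $2$. Commutativity inside each $\mathcal M_{A_{r_k}}$ replaces ``hidden variables'', and the factorization condition (2) of Definition \ref{def2.2} replaces the independence of the sources. Throughout, I take $\tau$ to be a state (positive, $\tau(I)=1$), as was implicitly used in the proof of Theorem \ref{the3.1}.

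\emph{Step 1 (pointwise identity).} Fix $i\in\{r_1,\dots,r_h\}$. Since $\mathcal M_{A_i}$ is Abelian, $A_{i,0}$ and $A_{i,1}$ generate an Abelian $C^*$-algebra $C(K_i)$. Under the Gelfand transform both become continuous functions with values in $\{1,-1\}$. At every point of $K_i$ exactly one of $A_{i,0}+A_{i,1}$ and $A_{i,0}-A_{i,1}$ vanishes, and the other has modulus $2$. Hence
\[
|A_{i,0}+A_{i,1}|+|A_{i,0}-A_{i,1}|=2I .
\]
Using functional calculus, we also get polar decompositions $A_{i,0}\pm A_{i,1}=U_i^{\pm}\,|A_{i,0}\pm A_{i,1}|$. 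Here $U_i^{\pm}$ are self-adjoint unitaries in $\mathcal M_{A_i}$ that commute with everything in $\mathcal M_{A_i}$.

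\emph{Step 2 (removing the non-independent parties).} Put $P=\prod_{i}|A_{i,0}+A_{i,1}|\ge 0$ and $W=\prod_i U_i^{+}\prod_{j\notin\{r_k\}}A_{j,0}$. Then $W$ is a contraction. Because the algebras mutually commute and each $\mathcal M_{A_{r_k}}$ is Abelian, $W$ commutes with $P^{1/2}$. This gives $I_\tau=\tau(P^{1/2}WP^{1/2})$. In the GNS representation, as in the proof of Theorem \ref{the3.1}, the Cauchy--Schwarz inequality yields
\[
|I_\tau|\le \|\pi_\tau(P^{1/2})\Omega\|^2=\tau(P).
\]
The factorization condition (2) applied to $\{r_1,\dots,r_h\}$ then gives $\tau(P)=\prod_i\tau(|A_{i,0}+A_{i,1}|)$. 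In the same way,
\[
|J_\tau|\le\prod_i\tau(|A_{i,0}-A_{i,1}|).
\]

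\emph{Step 3 (AM--GM).} Set $s_i=\tfrac12\tau(|A_{i,0}+A_{i,1}|)$. By Step 1 and the normalization of $\tau$, $\tfrac12\tau(|A_{i,0}-A_{i,1}|)=1-s_i$, and positivity gives $s_i\in[0,1]$. Therefore
\[
|I_\tau|^{1/h}+|J_\tau|^{1/h}\le 2\Big[\big(\textstyle\prod_i s_i\big)^{1/h}+\big(\prod_i(1-s_i)\big)^{1/h}\Big].
\]
By the arithmetic--geometric mean inequality, the right-hand side is at most $2\big[\tfrac1h\sum_i s_i+\tfrac1h\sum_i(1-s_i)\big]=2$. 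This proves (\ref{g315}).

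The main obstacle is Step 2. One must make sure that the polar-decomposition unitaries and the observables of the remaining parties can be absorbed into a single contraction commuting with $P^{1/2}$. This is exactly where both the mutual commutativity of the family and the Abelianness of the $\mathcal M_{A_{r_k}}$ are needed; without the latter, $|A_{i,0}\pm A_{i,1}|$ would not commute with $U_i^{\pm}$, and the identity of Step 1 would fail. I would first verify this commutation carefully in the Gelfand picture and only then apply condition (2).
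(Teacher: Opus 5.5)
Your proof is correct and is essentially the paper's argument. The paper writes $A_{i,0}\pm A_{i,1}$ through the positive elements $a^i_{\epsilon_0\epsilon_1}=\frac14(1+\epsilon_0A_{i,0})(1+\epsilon_1A_{i,1})$, and since $A_{i,x}^2=I$ one has exactly $|A_{i,0}+A_{i,1}|=2(a^i_{++}+a^i_{--})$ and $|A_{i,0}-A_{i,1}|=2(a^i_{+-}+a^i_{-+})$. So your Step 1 identity, the removal of the non-independent parties by positivity, the factorization via condition (2), and your AM--GM step (the paper's H\"older step with $\tau(a^i_{++}+a^i_{--})+\tau(a^i_{+-}+a^i_{-+})=1$) correspond one-to-one to the paper's steps; your Step 2 (absorbing the sign unitaries and the other parties' contractions into $W$, which commutes with $P^{1/2}$, then Cauchy--Schwarz in the GNS space) is a cleaner justification of the paper's first inequality, which the paper attributes only to the order-preserving property of $\tau$.
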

\begin{proof}
For any $i\in\{r_1,r_2,...,r_h\}$, if $\mathcal M_{A_{i}}$ is Abelian,  then
\begin{align} \label{abelian}
a^{i}_{\epsilon_0\epsilon_1}\equiv\frac{1}{4}(1+\epsilon_0 A_{i,0})(1+\epsilon_1 A_{i,1})
\end{align}
with $\epsilon_0,\epsilon_1\in\{+,-\}$ is positive. Clearly, it follows from Eq. (\ref{abelian}) that we have
$$A_{i,0}+A_{i,1}=2(a^{i}_{++}-a^i_{--}),\,\,\,\,A_{i,0}-A_{i,1}=2(a^{i}_{+-}-a^{i}_{-+}).$$ 
Thus, we get
\begin{align}
&|I_\tau|^{\frac{1}{h}}+|J_\tau|^{\frac{1}{h}} \nonumber \\
=& \Biggl|\tau\Biggl(\prod_{i\in\{r_1,r_2,\dots,r_h\}}(A_{i,0}+A_{i,1})\prod_{j\in\{1,2,\dots,m\}\setminus\{r_1,r_2,\dots,r_h\}}A_{j,0}\Biggr) \Biggr|^{\frac{1}{h}} \nonumber \\
&+\Biggl|\tau\Biggl(\prod\limits_{i\in\{r_1,r_2,\dots,r_h\}}(A_{i,0}-A_{i,1})\prod_{j\in\{1,2,\dots,m\}\setminus\{r_1,r_2,\dots,r_h\}}A_{j,1}\Biggr)\Biggr|^{\frac{1}{h}} \nonumber\\
=&2\Biggl[ \Biggl|\tau\Biggl(\prod_{i\in\{r_1,r_2,\dots,r_h\}}(a^i_{++}+a^i_{--})\prod_{j\in\{1,2,\dots,m\}\setminus\{r_1,r_2,\dots,r_h\}}A_{j,0}\Biggr) \Biggr|^{\frac{1}{h}} \nonumber \\
&+\Biggl|\tau\Biggl(\prod\limits_{i\in\{r_1,r_2,\dots,r_h\}}(a^i_{+-}-a^i_{-+})\prod_{j\in\{1,2,\dots,m\}\setminus\{r_1,r_2,\dots,r_h\}}A_{j,1}\Biggr)\Biggr|^{\frac{1}{h}}\Biggl] \nonumber\\
\le&2\Biggl[  \Biggl( \prod\limits_{i,k\in\{r_1,r_2,\dots,r_h\},i\neq k}|\tau(a^i_{++}a^{k}_{++})+\tau(a^i_{++}a^{k}_{--})+\tau(a^i_{--}a^{k}_{++})+\tau(a^i_{--}a^{k}_{--})
|\Biggl)^{\frac{1}{h}}
\nonumber \\
&+\Biggl( \prod\limits_{i,k\in\{r_1,r_2,\dots,r_h\},i\neq j}|\tau(a^i_{+-}a^{k}_{+-})+\tau(a^i_{+-}a^{k}_{-+})+\tau(a^i_{-+}a^{k}_{+-})+\tau(a^i_{-+}a^{k}_{-+})
|\Biggl)^{\frac{1}{h}}\Biggl]\nonumber \\
=&2\Biggl[  \Biggl( \prod\limits_{i,k\in\{r_1,r_2,\dots,r_h\},k\neq j}|\tau(a^i_{++}+a^{i}_{--})||\tau(a^k_{--}+a^{k}_{++})
|\Biggl)^{\frac{1}{h}} \nonumber \\
&+\Biggl( \prod\limits_{i,k\in\{r_1,r_2,\dots,r_h\},k\neq j}|\tau(a^i_{+-}+a^{i}_{-+})||\tau(a^k_{+-}+a^{k}_{-+})
|\Biggl)^{\frac{1}{h}} \Biggl] \nonumber\\
\leq&2\prod\limits_{i,k\in\{r_1,r_2,\dots,r_h\},i\neq k}[(\tau(a^i_{++}+a^{i}_{--})+\tau(a^i_{+-}+a^{i}_{-+}))(\tau(a^k_{--}+a^{k}_{++})+\tau(a^k_{+-}+a^{k}_{-+}))]^{\frac{1}{h}}\nonumber \\
\leq&2\times\prod\limits_{i,k\in\{r_1,r_2,\dots,r_h\},i\neq k} 1\nonumber \\
=&2,
\end{align}
where the first inequality follows the fact that the spectra of $-I\le A_{i,0},A_{i,1}\le I$ and the order-preserving of state $\tau$ for any $j\in\{1,2,\dots,m\}\setminus\{r_1,r_2,\dots,r_h\}$, and the second inequality holds because of the H\"older inequality and the non-negativeness of $a^i_{\epsilon_0\epsilon_1}$.
\end{proof}

Based on the Theorem \ref{the3.2}, one can infer that the commutativity of the algebra families indeed affects the upper bound of the $\mathcal S_{\tau}$ shown in Eq. (\ref{biin}). It is worth noting that the independent parties in any given network are not unique. Once a set of independent parties is specified, it can always be shown through Theorem \ref{the3.2} that a violation of the Ineq. (\ref{g315}) implies that at least one of the algebra families $\{\mathcal M_{A_i}\}_{i\in\{r_1,r_2,\dots,r_h\}}$ is non-abelian. Here,  violation of the Ineq. (\ref{g315}) refers to $2<\mathcal S_{\tau}\leq 2\sqrt{2}$.

Furthermore, we attempt to analyze the influence of the choice of state on the value of $\mathcal S_{\tau}$. Before that, we first recall the notion of product state. Let $\mathcal{M}_{A_u}$ and $\mathcal{M}_{A_v}$ be von Neumann algebras. A state $\tau\in\mathcal{M}_{A_uA_v}^*$ is said to be product if there exist states $\xi_{\alpha}\in\mathcal{M}_{A_u}^*$ and $\eta_{\alpha}\in\mathcal{M}_{A_v}^*$
such that 
$$\tau(A_uA_v)=\xi_{\alpha}(A_u)\eta_{\alpha}(A_v)$$
for all $A_u\in\mathcal{M}_{A_u}$ and $A_v\in\mathcal{M}_{A_v}$.
If $\tau\in\mathcal{M}_{A_uA_v}^*$ is the closed convex hull of product states, then it is called separable, i.e., 
if there are $\xi_{\alpha_k}\in \mathcal M_{A_u^w}^*$ 
and $\eta_{\alpha_k}\in \mathcal M_{A_v^l}^*$
 such that  for  all $(A_u)^w_{i_w}\in \mathcal M_{A_u^w}$ and $(A_v)^l_{j_l}\in \mathcal M_{A_v^l}$,
$$\tau((A_u)^w_{i_w}(A_v)^l_{j_l})=\sum_{\lambda_{\alpha_k}}p(\lambda_{\alpha_k})\xi_{\alpha_k}((A_u)^w_{i_w})\eta_{\alpha_k}((A_v)^l_{j_l})$$ 
with $\sum_{\lambda_{\alpha_k}}p(\lambda_{\alpha_k})=1$,
then 
we call a state 
$\tau\in \mathcal \mathcal M_{A_uA_v}^*$  separable on $(\mathcal M_{A_u}\vee \mathcal M_{A_v})^{\prime\prime}$.  Otherwise, $\tau\in \mathcal \mathcal M_{A_uA_v}^*$ is called entangled.

In non-relativistic quantum mechanics, if the sources shared among the parties are all separable states  in a quantum network, then Ineq. (\ref{S1}) must hold. Corresponding to the MCvNA model of the quantum network, we assume that if sources are shared among the parties, then the restrictions of $\tau$ to these systems are all separable states, and the following theorem can be obtained.

\begin{theorem}\label{the3.3}
For the mutually commuting von Neuamnn algebra models of the quantum network $\Xi(n,m)$,
if  there exists a state $\tau$ on $\mathcal M_{A_1A_2\cdots A_m}$ such that,
whenever the $w$-th subsystem of party $A_u$ and the $l$-th subsystem of party $A_v$ share a source in the network $\Xi(n,m)$ for any $u,v\in\{1,2...,m\}$ and $u\neq v$,
the state $\tau|_{(\mathcal{M}_{\mathcal A_u^w}\vee\mathcal{M}_{\mathcal A_v^l})''}$ is separable,  then for any scheme with two inputs and two outputs, the observables $A_{i,x_{i}}$ $(i\in\{1,2,...,m\})$ satisfy 
$$\sup_{\{A_{i,0}, A_{i,1}\}_{i\in\{1,2,...,m\}}}\mathcal S_{\tau}=\sup_{\{A_{i,0}, A_{i,1}\}_{i\in\{1,2,...,m\}}}(|I_\tau|^{\frac{1}{h}}+|J_\tau|^{\frac{1}{h}})=2.$$

\end{theorem}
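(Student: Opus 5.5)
Two inequalities will give the equality: $\mathcal S_\tau\le 2$ for every admissible choice of observables, and a choice reaching $2$.

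\emph{Attainability.} Put $A_{i,0}=A_{i,1}=I$ for every $i$. Then $A_{i,0}-A_{i,1}=0$, so $J_\tau=0$. For the other term, $I_\tau=\tau(2^hI)=2^h$, hence $|I_\tau|^{1/h}=2$ and $\mathcal S_\tau=2$. So the supremum is at least $2$.

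\emph{Upper bound.} The idea is to turn $\tau$ into a local hidden-variable model with independent sources and then run the classical argument behind Ineq.~(\ref{S1}) from \cite{12}.

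First I decompose $\tau$. Each party's algebra $\mathcal M_{A_u}$ is generated by its subsystems $\mathcal M_{A_u^w}$, one for each source it receives. By hypothesis, for every source shared between $A_u^w$ and $A_v^l$, the restriction of $\tau$ to $(\mathcal M_{A_u^w}\vee\mathcal M_{A_v^l})''$ is a convex combination $\sum_{\lambda}p(\lambda)\xi_\lambda\otimes\eta_\lambda$, where the index $\lambda$ runs over a hidden variable $\lambda_s$ attached to that source $s$. The sources are independent: this is condition (2) of Definition~\ref{def2.2}, together with the fact that the global state is assembled source by source. So I write
\[
\tau\Bigl(\prod_i A_{i,x_i}\Bigr)=\int\prod_s dp_s(\lambda_s)\prod_{i=1}^m \langle A_{i,x_i}\rangle_{\vec\lambda_i},
\]
where $\vec\lambda_i$ collects the variables of the sources reaching party $i$. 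Each local expectation $\langle A_{i,x_i}\rangle_{\vec\lambda_i}$ is a state applied to a self-adjoint contraction, so it is a real number in $[-1,1]$.

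Next I bound the two terms. Write $\alpha_{i,x}=\langle A_{i,x}\rangle$. For $j\notin\{r_1,\dots,r_h\}$ I use $|\alpha_{j,x}|\le 1$ to drop the non-independent parties:
\[
|I_\tau|\le\int\prod_{k=1}^h|\alpha_{r_k,0}+\alpha_{r_k,1}|\,dp .
\]
The parties $r_1,\dots,r_h$ share no source, so their hidden variables are independent and the integral factorizes into $\prod_k\int|\alpha_{r_k,0}+\alpha_{r_k,1}|$. The same holds for $J_\tau$ with minus signs. Set $x_k=\int|\alpha_{r_k,0}+\alpha_{r_k,1}|$ and $y_k=\int|\alpha_{r_k,0}-\alpha_{r_k,1}|$. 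Since $|a+b|+|a-b|=2\max(|a|,|b|)\le 2$, we have $x_k+y_k\le 2$. The inequality $\bigl(\prod x_k\bigr)^{1/h}+\bigl(\prod y_k\bigr)^{1/h}\le\prod_k(x_k+y_k)^{1/h}\le 2$ follows from AM--GM (the same Hölder step used in Theorem~\ref{the3.2}). This gives $\mathcal S_\tau\le 2$.

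The main obstacle is the decomposition step. The hypothesis only gives separability of $\tau$ on pairs of subsystems that share a source. To justify the global integral representation with independent $\lambda_s$, I need to combine those local decompositions with the factorization constraint in Definition~\ref{def2.2}. I would first try to build a product of the source states through GNS, using the mutual commutation of the $\mathcal M_{A_u^w}$. If that fails, I would instead read the hypothesis as the modelling assumption that $\tau$ is such a product of separable source states, which is how the non-relativistic analogue is proved.
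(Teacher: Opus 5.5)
Your chain of estimates is the same as the paper's. You express the correlators through the separable (hidden-variable) decomposition, bound the non-independent parties' factors by $1$, factor over the independent parties, and close with $|a+b|+|a-b|=2\max(|a|,|b|)\le 2$ and the H\"older/AM--GM step. Your version improves on the paper in two places:
\begin{itemize}
\item You prove the lower bound. The paper's proof only gives $\le 2$; the choice $A_{i,0}=A_{i,1}=I$ appears only later, in the corollary.
\item You discard the non-independent factors \emph{before} factorizing. The paper instead writes $I_\tau$ directly as a product over $k$ of sums containing $\eta_\beta(A_{k',0})\eta(X_0)$. That is unjustified when a non-independent party's observable acts jointly on subsystems linked to different independent parties.
\end{itemize}

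The decomposition step carries the whole argument, and your first option for it (building it via GNS from the stated hypotheses) cannot work. The hypotheses do not imply a representation with independent $\lambda_s$. Condition (2) of Definition~\ref{def2.2} only factorizes the marginal on the independent parties, so it is not where source independence comes from. Pairwise separability says nothing about correlations between subsystems that do not share a source.

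Here is a concrete counterexample. Realize the chain $A_1$--$A_2$--$A_3$ on qubits $1,2a,2b,3$. Declare $2a$ (resp.\ $2b$) to be the subsystem of $A_2$ receiving the source shared with $A_1$ (resp.\ $A_3$). Let $\tau$ be the vector state of $|\Phi^+\rangle_{1,2b}\otimes|\Phi^+\rangle_{2a,3}$.
\begin{itemize}
\item The restrictions to $(1,2a)$ and $(2b,3)$ are normalized traces, hence product states, hence separable.
\item $\tau(A_1A_3)=\tau(A_1)\tau(A_3)$, so condition (2) holds.
\item Take Pauli $Z,X$ and set $A_{1,0}=A_{3,0}=(Z+X)/\sqrt2$, $A_{1,1}=A_{3,1}=(Z-X)/\sqrt2$, $A_{2,0}=Z_{2a}Z_{2b}$, $A_{2,1}=X_{2a}X_{2b}$. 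This gives $I_\tau=J_\tau=2$, so $\mathcal S_\tau=2\sqrt2$.
\end{itemize}

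So the statement holds only under an extra assumption: that $\tau$ is assembled source by source from separable source states. The paper's proof assumes this tacitly when it writes $I_\tau$ as a hidden-variable sum. You should commit to your fallback. State it directly as the integral representation, with each $\langle\cdot\rangle_{\vec\lambda_i}$ a state on $\mathcal M_{A_i}$ and the $\lambda_s$ independent. Don't try to derive it, because gluing local components from different sources into a state on $\mathcal M_{A_i}$ needs product states on joins of commuting algebras, which in general requires an independence assumption. With that hypothesis, your proof is complete.
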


 \begin{proof}
 Using the separability assumption of $\tau$, we get 
 \begin{align}
&|I_\tau|^{\frac{1}{h}}+|J_\tau|^{\frac{1}{h}} \nonumber \\
=& \Biggl|\tau\Biggl(\prod_{i\in\{r_1,r_2,\dots,r_{h}\}}(A_{i,0}+A_{i,1})\prod_{j\in\{1,2,\dots,m\}\setminus\{r_1,r_2,\dots,r_{h}\}}A_{j,0}\Biggr) \Biggr|^{\frac{1}{h}} \nonumber \\
&+\Biggl|\tau\Biggl(\prod\limits_{i\in\{r_1,r_2,\dots,r_{h}\}}(A_{i,0}-A_{i,1})\prod_{j\in\{1,2,\dots,m\}\setminus\{r_1,r_2,\dots,r_{h}\}}A_{j,1}\Biggr)\Biggr|^{\frac{1}{h}} \nonumber\\
=&\Biggr| \prod_{k\in\{r_1,r_2,...,r_{h}\}}\sum_{\lambda_{\beta}\in\Lambda_k}p(\lambda_{\beta} )\xi_{\alpha_{\Lambda_k}}(A_{k,0}+A_{k,1})\eta_{\beta}(A_{k',0})\eta(X_0)
\Biggr|^{\frac{1}{h}}
\nonumber \\
&+
\Biggr| \prod_{k\in\{r_1,r_2,...,r_{h_1}\}}\sum_{\lambda_{\beta}\in\Lambda_k}p(\lambda_{\beta})\xi_{\alpha_{\Lambda_{k}}}(A_{k,0}-A_{k,1})\eta_{\beta}(A_{k',1})\eta(X_1)
\Biggr|^{\frac{1}{h}} \nonumber \\
\leq&\Biggr\{ \prod_{k\in\{r_1,r_2,...,r_{h_1}\}}\sum_{\lambda_{\beta}\in\Lambda_k}p(\lambda_\beta)|\xi_{\alpha_{\Lambda_k}}(A_{k,0}+A_{k,1})|
\Biggr\}^{\frac{1}{h}} \nonumber \\
&+
\Biggr\{ \prod_{k\in\{r_1,r_2,...,r_{h_1}\}}\sum_{\lambda_{\beta}\in\Lambda_k}p(\lambda_\beta)|\xi_{\alpha_{\Lambda_k}}(A_{k,0}-A_{k,1})|
\Biggr\}^{\frac{1}{h}} \nonumber \\
\leq&\prod_{k\in\{r_1,r_2,...,r_{h}\}}
\Biggr\{  \sum_{\lambda_{\beta}\in\Lambda_k}p(\lambda_{\beta})(|\xi_{\alpha_{\Lambda_k}}(A_{k,0}+A_{k,1})|+|\xi_{\alpha_{\Lambda_k}}(A_{k,0}-A_{k,1})|)
\Biggr\}^{\frac{1}{h}} \nonumber \\
=&\prod_{k\in\{r_1,r_2,...,r_{h}\}}
\Biggr(  2\sum_{\lambda_{\beta}\in\Lambda_k}p(\lambda_\beta)
{\rm
max}\{\xi_{\alpha_{\Lambda_k}}(A_{k,0}
),\xi_{\alpha_{\Lambda_k}}(A_{k,1})\}\Biggr)^{\frac{1}{h}} \nonumber \\
\leq&2,\end{align}
where ${k'}$ denotes the scenario where the party sharing sources with the independent party ${\bf A}_k$ has ${\bf A}_{k'}$, and the measurements performed by 
${\bf A}_{k'}$ on this subsystem with inputs 0 and 1 are denoted as 
$A_{k',0}$ and 
$A_{k',1}$, respectively,
$X_i$ 
  denotes all measurements with input $i$ performed on the remaining non-independent parties in the network, excluding the subsystems connected to the independent parties,  the second inequality holds because of the H\"older inequality, and the last obeys the conditions of order-preserving of state $\tau$ and the spectra of $A_{i,0}$ and $A_{i,1}$ are $-1$ and $1$ for any $k\in\{r_1,r_2,...,r_{h_1}\}$. 
 \end{proof}

Apparently, according to Theorems \ref{the3.2} and \ref{the3.3}, it can be seen that the violation of the inequality depends not only on the algebraic structure but also on the choice of observables and states. Next, we define a quantity 
\begin{align} \label{g319}
\mathcal S(\tau,\mathcal M_{A_1}, \mathcal M_{A_2},\cdots \mathcal M_{A_m})
=\sup_{\{A_{i,0}, A_{i,1}\}_{i\in\{1,2,...,m\}}}(|I_\tau|^{\frac{1}{h}}+|J_\tau|^{\frac{1}{h}}).
\end{align}

From Theorems \ref{the3.1} and \ref{the3.2}, we naturally obtain the following corollary.

\begin{corollary} In the mutually commuting von Neuamnn algebra models of the quantum network $\Xi(n,m)$ with the maximal independent number $h_{\rm max}$, 

{\rm (1)} $2\leq \mathcal S(\tau,\mathcal M_{A_1}, \mathcal M_{A_2},\cdots,\mathcal M_{A_m})\leq2\sqrt{2}$

{\rm (2)} If $\mathcal M_{A_{r_1}}, \mathcal  M_{A_{r_2}},...,\mathcal M_{A_{r_h}}$ are Abelian, then 
$$\mathcal S(\tau,\mathcal M_{A_1}, \mathcal M_{A_2},\cdots \mathcal M_{A_m})=2,$$
 where $2\leq h\leq h_{\rm max}$ and $\{r_1,r_2,...,r_h\}\in\Xi(n,m,h^{D_h})$.

{\rm (3)} For any states $\phi,\psi\in[(\mathcal M_{A_1} \vee \mathcal M_{A_2}\vee \cdots \vee\mathcal M_{A_m})^{\prime\prime}]^*$, the following inequality holds:
$|\mathcal S(\phi, \mathcal M_{A_1}, \mathcal M_{A_2},\cdots,\mathcal M_{A_m})-\mathcal S(\psi, \mathcal M_{A_1}, \mathcal M_{A_2},\cdots,\mathcal M_{A_m})|\le k\|\phi-\psi\|^{\frac{1}{h}}$, where $k$ is a positive constant. 
So, the functional $\phi\mapsto\mathcal S(\phi, \mathcal M_{A_1}, \mathcal M_{A_2},\cdots,\mathcal M_{A_m})=\sup_{\{A_{i,0}, A_{i,1}\}_{i\in\{1,2,\cdots,m\}}}(|I_\tau|^{\frac{1}{h}}+|J_\tau|^{\frac{1}{h}})$ is norm continuous.

\end{corollary}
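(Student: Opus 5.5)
The plan is to prove the three items separately. Items (1) and (2) come from Theorems \ref{the3.1} and \ref{the3.2} plus an explicit choice of observables that attains the value $2$. Item (3) is an elementary estimate done for fixed observables, followed by taking suprema.

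For (1), the upper bound $\mathcal S(\tau,\mathcal M_{A_1},\dots,\mathcal M_{A_m})\le 2\sqrt2$ is Theorem \ref{the3.1}, taking the supremum over all admissible $\{A_{i,0},A_{i,1}\}$. For the lower bound I will exhibit observables with $\mathcal S_\tau=2$. Take $A_{i,0}=A_{i,1}=I$ for every $i$; the identity lies in every von Neumann algebra and is an admissible $\pm1$-valued observable, namely the degenerate case $A_{i,1|x_i}=0$. Then
\[
\prod_{i\in\{r_1,\dots,r_h\}}(A_{i,0}+A_{i,1})=2^hI,\qquad \prod_{i\in\{r_1,\dots,r_h\}}(A_{i,0}-A_{i,1})=0.
\]
Since $\tau$ is a state, $I_\tau=2^h$ and $J_\tau=0$, so $\mathcal S_\tau=(2^h)^{1/h}=2$. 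This gives $\mathcal S\ge 2$.

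For (2), Theorem \ref{the3.2} gives $\mathcal S_\tau\le 2$ for every admissible choice of observables when $\mathcal M_{A_{r_1}},\dots,\mathcal M_{A_{r_h}}$ are abelian. Hence the supremum is at most $2$, and the lower bound from (1) gives equality.

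For (3), fix observables $\{A_{i,0},A_{i,1}\}$ and write $X$ and $Y$ for the operators inside $\tau$ in $I_\tau$ and $J_\tau$. Since every $A_{i,x}$ has norm at most $1$, we get $\|X\|,\|Y\|\le 2^h$. Therefore
\[
|I_\phi-I_\psi|=|(\phi-\psi)(X)|\le 2^h\|\phi-\psi\|,
\]
and likewise for $J$. The key elementary fact is that $t\mapsto t^{1/h}$ is concave on $[0,\infty)$ with value $0$ at $0$, hence subadditive. This gives $\bigl||a|^{1/h}-|b|^{1/h}\bigr|\le |a-b|^{1/h}$ for real or complex $a,b$. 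Applying it to the $I$ and $J$ terms yields
\[
|\mathcal S_\phi-\mathcal S_\psi|\le 2\,(2^h\|\phi-\psi\|)^{1/h}=4\|\phi-\psi\|^{1/h}.
\]
This bound is uniform in the observables. Since $|\sup_\alpha f_\alpha-\sup_\alpha g_\alpha|\le\sup_\alpha|f_\alpha-g_\alpha|$, and both suprema are finite by (1), the estimate passes to $\mathcal S(\phi,\cdot)$ and $\mathcal S(\psi,\cdot)$ with $k=4$. Norm continuity of $\phi\mapsto\mathcal S(\phi,\dots)$ follows immediately, since $\|\phi-\psi\|^{1/h}\to0$ as $\|\phi-\psi\|\to0$.

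I expect the main obstacle to be the Hölder-type step in (3). The map $t\mapsto t^{1/h}$ is not Lipschitz at $0$, so one cannot bound the difference linearly and must use subadditivity, which is why the exponent $1/h$ appears. A second point to state explicitly is the admissibility of the degenerate observable $I$ in (1). If the paper requires spectrum exactly $\{1,-1\}$, I would instead take $A_{i,0}=A_{i,1}=U$ for a fixed self-adjoint unitary $U$. The $r$-factors then give $(2U)$ products, and one must check that the resulting expectation still has modulus $2^h$. Using $I$ avoids this issue, so I will adopt it.
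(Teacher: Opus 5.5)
Your proposal is correct and follows the paper's proof essentially step for step. Parts (1) and (2) come from Theorems \ref{the3.1} and \ref{the3.2} together with the identity observables, which give $I_\tau=2^h$ and $J_\tau=0$; part (3) uses $|\sup f-\sup g|\le\sup|f-g|$, the estimate $\bigl||a|^{1/h}-|b|^{1/h}\bigr|\le|a-b|^{1/h}$, and $|(\phi-\psi)(X)|\le\|\phi-\psi\|\,\|X\|$. Your bound $\|X\|,\|Y\|\le 2^h$ makes the paper's unspecified constant explicit ($k=4$). One small adjustment: for arbitrary states $\phi,\psi$, take finiteness of the suprema from this same norm bound (which gives $\mathcal S\le 4$) rather than from (1), because Theorem \ref{the3.1} relies on the network factorization property of the state.
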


\begin{proof}
To prove  (1). Since $|I_\tau|^{\frac{1}{h}}+|J_\tau|^{\frac{1}{h}}=2$ when $A_{i,0}=A_{i,1}=A_{j,0}=I$ for any $i\in\{r_1,r_2,\dots,r_h\}$ and $j\in\{1,2,\dots,m\}\setminus\{r_1,r_2,\dots,r_h\}$. 
Combining the Theorem \ref{the3.1}, we get (1).

(2) holds by Theorem \ref{the3.2} and the fact that $|I_\tau|^{\frac{1}{h}}+|J_\tau|^{\frac{1}{h}}=2$ when $A_{i,0}=A_{i,1}=A_{j,0}=I$ for any $i\in\{r_1,r_2,\dots,r_h\}$ and $j\in
\{1,2,\dots,m\}\setminus\{r_1,r_2,\dots,r_h\}$.

To prove (3). By combining $|\sup x - \sup y| \le \sup |x-y|$ and the triangle inequality, one can get
\begin{align}
&\Biggr| \sup_{\{A_{i,0}, A_{i,1}\}_{i\in\{1,2,...,m\}}}(|I_\phi|^{\frac{1}{h}}+|J_\phi|^{\frac{1}{h}})-\sup_{\{A_{i,0}, A_{i,1}\}_{i\in\{1,2,...,m\}}}(|I_\psi|^{\frac{1}{h}}+|J_\psi|^{\frac{1}{h}})\Biggr|\nonumber \\
=&\Biggr|\sup_{\{A_{i,0}, A_{i,1}\}_{i\in\{1,2,...,m\}}}\Biggr\{
\Biggl|\phi\Biggl(\prod_{i\in\{r_1,r_2,\dots,r_h\}}(A_{i,0}+A_{i,1})\prod_{j\in\{1,2,\dots,m\}\setminus\{r_1,r_2,\dots,r_h\}}A_{j,0}\Biggr) \Biggr|^{\frac{1}{h}} \nonumber \\
&+\Biggl|\phi\Biggl(\prod\limits_{i\in\{r_1,r_2,\dots,r_h\}}(A_{i,0}-A_{i,1})\prod_{j\in\{1,2,\dots,m\}\setminus\{r_1,r_2,\dots,r_h\}}A_{j,1}\Biggr)\Biggr|^{\frac{1}{h}} \Biggl\}\nonumber\\
&-\sup_{\{A_{i,0}, A_{i,1}\}_{i\in\{1,2,...,m\}}}\Biggr\{
\Biggl|\psi\Biggl(\prod_{i\in\{r_1,r_2,\dots,r_h\}}(A_{i,0}+A_{i,1})\prod_{j\in\{1,2,\dots,m\}\setminus\{r_1,r_2,\dots,r_h\}}A_{j,0}\Biggr) \Biggr|^{\frac{1}{h}} \nonumber \\
&+\Biggl|\psi\Biggl(\prod\limits_{i\in\{r_1,r_2,\dots,r_h\}}(A_{i,0}-A_{i,1})\prod_{j\in\{1,2,\dots,m\}\setminus\{r_1,r_2,\dots,r_h\}}A_{j,1}\Biggr)\Biggr|^{\frac{1}{h}} \Biggl\} \Biggl|\nonumber\\
\leq&\sup_{\{A_{i,0}, A_{i,1}\}_{i\in\{1,2,...,m\}}}\Biggr| 
\Biggl|\phi\Biggl(\prod_{i\in\{r_1,r_2,\dots,r_h\}}(A_{i,0}+A_{i,1})\prod_{j\in\{1,2,\dots,m\}\setminus\{r_1,r_2,\dots,r_h\}}A_{j,0}\Biggr) \Biggr|^{\frac{1}{h}} \nonumber \\
&+\Biggl|\phi\Biggl(\prod\limits_{i\in\{r_1,r_2,\dots,r_h\}}(A_{i,0}-A_{i,1})\prod_{j\in\{1,2,\dots,m\}\setminus\{r_1,r_2,\dots,r_h\}}A_{j,1}\Biggr)\Biggr|^{\frac{1}{h}}
\nonumber \\
&-\Biggl|\psi\Biggl(\prod_{i\in\{r_1,r_2,\dots,r_h\}}(A_{i,0}+A_{i,1})\prod_{j\in\{1,2,\dots,m\}\setminus\{r_1,r_2,\dots,r_h\}}A_{j,0}\Biggr) \Biggr|^{\frac{1}{h}} \nonumber \\
&-\Biggl|\psi\Biggl(\prod\limits_{i\in\{r_1,r_2,\dots,r_h\}}(A_{i,0}-A_{i,1})\prod_{j\in\{1,2,\dots,m\}\setminus\{r_1,r_2,\dots,r_h\}}A_{j,1}\Biggr)\Biggr|^{\frac{1}{h}}
\Biggl| 
\nonumber \\
\leq&\sup_{\{A_{i,0}, A_{i,1}\}_{i\in\{1,2,...,m\}}}
\Biggl\{ \Biggl|\Biggl|\phi\Biggl(\prod_{i\in\{r_1,r_2,\dots,r_h\}}(A_{i,0}+A_{i,1})\prod_{j\in\{1,2,\dots,m\}\setminus\{r_1,r_2,\dots,r_h\}}A_{j,0}\Biggr) \Biggr|^{\frac{1}{h}}
\nonumber \\
&-\Biggl|\psi\Biggl(\prod_{i\in\{r_1,r_2,\dots,r_h\}}(A_{i,0}+A_{i,1})\prod_{j\in\{1,2,\dots,m\}\setminus\{r_1,r_2,\dots,r_h\}}A_{j,0}\Biggr) \Biggr|^{\frac{1}{h}} \Biggl|\nonumber \\
&+\Biggl|\Biggl|\phi\Biggl(\prod\limits_{i\in\{r_1,r_2,\dots,r_h\}}(A_{i,0}-A_{i,1})\prod_{j\in\{1,2,\dots,m\}\setminus\{r_1,r_2,\dots,r_h\}}A_{j,1}\Biggr)\Biggr|^{\frac{1}{h}}
\nonumber \\
&-\Biggl|\psi\Biggl(\prod\limits_{i\in\{r_1,r_2,\dots,r_h\}}(A_{i,0}-A_{i,1})\prod_{j\in\{1,2,\dots,m\}\setminus\{r_1,r_2,\dots,r_h\}}A_{j,1}\Biggr)\Biggr|^{\frac{1}{h}}\Biggl|
\Biggr\} \nonumber \\
\leq&\sup_{\{A_{i,0}, A_{i,1}\}_{i\in\{1,2,...,m\}}}
\Biggl\{  
\Biggl|(\phi-\psi)\Biggl(\prod_{i\in\{r_1,r_2,\dots,r_h\}}(A_{i,0}+A_{i,1})\prod_{j\in\{1,2,\dots,m\}\setminus\{r_1,r_2,\dots,r_h\}}A_{j,0}\Biggr) \Biggr|^{\frac{1}{h}}
\nonumber \\
&+\Biggl|(\phi-\psi)\Biggl(\prod_{i\in\{r_1,r_2,\dots,r_h\}}(A_{i,0}-A_{i,1})\prod_{j\in\{1,2,\dots,m\}\setminus\{r_1,r_2,\dots,r_h\}}A_{j,1}\Biggr) \Biggr|^{\frac{1}{h}}
\Biggr\} \nonumber \\
\leq&\sup_{\{A_{i,0}, A_{i,1}\}_{i\in\{1,2,...,m\}}}
\Biggl\{  \Biggl(
||\phi-\psi|| \Biggl| \Biggl|\prod_{i\in\{r_1,r_2,\dots,r_h\}}(A_{i,0}+A_{i,1})\prod_{j\in\{1,2,\dots,m\}\setminus\{r_1,r_2,\dots,r_h\}}A_{j,0}\Biggr|\Biggl|\Biggl)^{\frac{1}{h}}
\nonumber \\
&+\Biggr( ||\phi-\psi||\Biggl|\Biggl|\prod_{i\in\{r_1,r_2,\dots,r_h\}}(A_{i,0}-A_{i,1})\prod_{j\in\{1,2,\dots,m\}\setminus\{r_1,r_2,\dots,r_h\}}A_{j,1}\Biggr) \Biggr|\Biggl|\Biggr) ^{\frac{1}{h}}
\Biggr\} \nonumber \\
\leq&k||\phi-\psi||^{\frac{1}{h}}.
\end{align}
The reason why the third inequality holds here is that the function $f(x)=x^{\frac{1}{h}}$ is concave and monotonically increasing, and satisfies the H\"older continuity condition, so 
$|a^{\frac{1}{h}}-b^{\frac{1}{h}}|\leq|a-b|^{\frac{1}{h}}$ holds for any $a,b\geq0$ and 
$h\geq1$. Thus, the functional $\phi\rightarrow\mathcal S(\phi, \mathcal M_{A_1}, \mathcal M_{A_2},\cdots,\mathcal M_{A_m})=\sup_{\{A_{i,0}, A_{i,1}\}_{i\in\{1,2,\cdots,m\}}}(|I_\tau|^{\frac{1}{h}}+|J_\tau|^{\frac{1}{h}})$ is norm continuous.
\end{proof}

\section{Maximal violation of Bell-type inequalities and algebraic structures}

In the quantum network $\Xi(n,m)$, this section examines how the violation of Bell-type inequalities, particularly their maximal violation, can reflect structural properties of the underlying algebra. A violation of a Bell-type inequality means $\mathcal S_{\tau} > 2$, and the maximum quantum violation is $2\sqrt{2}$. In the following theorem, we analyze the conditions under which maximal violation occurs.

\begin{theorem}\label{thm4.1}For the mutually commuting von Neuamnn algebra models of the quantum network $\Xi(n,m)$ with the maximal independent number $h_{\rm max}$, if $\tau\in \mathcal M_{A_1A_2\cdots A_m}^*$ is a faithful network state and $\{r_1,r_2,...,r_h\}\in\Gamma(n,m,h^{D_h})$, then  
\begin{align}
|I_\tau|^{\frac{1}{h}}+|J_\tau|^{\frac{1}{h}}=2\sqrt{2}
\end{align}
if and only if $\tau((A_{i,0})^2A_i)=\tau(A_i),\tau((A_{i,1})^2A_i)=\tau(A_i)$ and $\tau[(A_{i,0}A_{i,1}+A_{i,1}A_{i,0})A_i]=0$  for any $A_i\in\mathcal{M}_{A_i}$,  $i\in\{r_1,r_2,...,r_h\}$, and $\tau(\prod_{j\in\{1,2,\dots,m\}\setminus\{r_1,r_2,\dots,r_h\}}(A_{j,0})^2A)=\tau(\prod_{j\in\{1,2,\dots,m\}\setminus\{r_1,r_2,\dots,r_h\}}(A_{j,1})^2A)=\tau(A)$ for any $A$ belonging to the von Neumann algebra generated by the non-independent parties.
\end{theorem}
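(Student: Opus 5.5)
The plan is to run the chain of inequalities in the proof of Theorem \ref{the3.1} backwards and analyse when each step is an equality. Equality $|I_\tau|^{\frac{1}{h}}+|J_\tau|^{\frac{1}{h}}=2\sqrt{2}$ forces equality in every inequality of that chain: in (\ref{E1}) (Cauchy--Schwarz in the GNS space $\mathcal H_\tau$), in (\ref{E2}) (where the norms $\|\pi_\tau(\prod_j A_{j,x})\Omega\|$ were bounded by $1$), in (\ref{g35})--(\ref{g36}) (where $A_{i,x}^2\le I$ was used), in (\ref{sin}) and in (\ref{E3})--(\ref{E7}). Conversely, if all these steps are equalities and the Cauchy--Schwarz vectors are parallel with the right phase, the value $2\sqrt2$ is attained. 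So the proof splits into a necessity part, which extracts algebraic identities from each equality case, and a sufficiency part, which plugs the identities back in.

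For necessity, I treat the steps in order.

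\textbf{Step (\ref{E2}).} Equality means $\|\pi_\tau(\prod_j A_{j,0})\Omega\|^2=\tau(\prod_j A_{j,0}^2)=1$, and likewise for input $1$. Since $0\le I-\prod_j A_{j,0}^2$ in the commuting algebra generated by the non-independent parties, faithfulness gives $\prod_j A_{j,0}^2=I$. Hence $\tau(\prod_j A_{j,0}^2A)=\tau(A)$ for all $A$ in that algebra, and the same holds for input $1$.

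\textbf{Steps (\ref{g35})--(\ref{g36}).} Equality gives $\tau(\prod_i(2I-A_{i,0}^2-A_{i,1}^2)\cdots)=0$ for positive commuting factors. Faithfulness then forces $A_{i,0}^2=A_{i,1}^2=I$ on each independent party, which yields $\tau(A_{i,x}^2A_i)=\tau(A_i)$.

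\textbf{Step (\ref{sin}) together with (\ref{E3}).} Equality in the AM--GM step (\ref{E3}) forces
\[
\tau\Bigl(\prod_i (A_{i,0}+A_{i,1})^2\Bigr)=\tau\Bigl(\prod_i (A_{i,0}-A_{i,1})^2\Bigr),
\]
and combined with the sum being $4$ each side equals $2^h$. Using the factorization $\tau(A_{r_1}\cdots A_{r_h})=\prod\tau(A_{r_k})$ from Definition \ref{def2.2}, this becomes $\prod_i(2+c_i)=\prod_i(2-c_i)=2^h$, where $c_i=\tau(A_{i,0}A_{i,1}+A_{i,1}A_{i,0})$. Equality in the concavity inequality (\ref{sin}) forces all $\theta_i$ equal, hence all $c_i$ equal, hence $c_i=0$. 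To upgrade this to $\tau[(A_{i,0}A_{i,1}+A_{i,1}A_{i,0})A_i]=0$ for every $A_i$, I would apply the same argument to the perturbed states $A\mapsto\tau(X^*AX)/\tau(X^*X)$ with $X\in\mathcal M_{A_i}$. The alternative route is to use faithfulness to deduce directly that the anticommutator, a self-adjoint element whose $\tau$-weighted value is extremal, vanishes as an operator.

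\textbf{Main obstacle.} This last upgrade from scalar to operator identities is the delicate step. Faithfulness of $\tau$ is exactly what allows moving from $\tau(\text{positive})=0$ to operator identities, and I expect most of the work to lie there.

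For sufficiency, assume the stated identities and choose the observables accordingly. Each independent party then satisfies $(A_{i,0}\pm A_{i,1})^2=2I$ in the $\tau$-sense, so the right-hand side of (\ref{eq3.3}) equals $2^{1/2}+2^{1/2}=2\sqrt2$. It remains to align the Cauchy--Schwarz step (\ref{E1}), i.e.\ to make $\pi_\tau(\prod_j A_{j,x})\Omega$ parallel to $\pi_\tau(\prod_i(A_{i,0}\pm A_{i,1}))\Omega/\sqrt{2}^{\,h}$. I would do this by choosing the non-independent observables from the $\mathcal M_2(\Bbb C)$-type structure generated by the anticommuting pair, in analogy with Summers--Werner \cite{summer1}, and then verify directly that $I_\tau=J_\tau=\sqrt2^{\,h}$.
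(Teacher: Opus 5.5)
Your strategy of reading off the equality cases of the chain in Theorem \ref{the3.1} is the paper's. Your treatment of (\ref{E2}) (faithfulness applied to $I-\prod_j A_{j,0}^2\ge 0$), of (\ref{g35})--(\ref{g36}), and of $c_i=0$ is sound, and in places cleaner than the paper's.

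The genuine gap is the step you flag yourself: upgrading $\tau(X_i)=0$, with $X_i=A_{i,0}A_{i,1}+A_{i,1}A_{i,0}$, to $X_i=0$. Neither suggested route works.
\begin{itemize}
\item For $\tau_X(\cdot)=\tau(X^*\cdot X)/\tau(X^*X)$ you would need $\mathcal S_{\tau_X}=2\sqrt2$ with the same observables. You do not justify this, and it is not automatic: a general $X\in\mathcal M_{A_i}$ does not commute with $A_{i,0}\pm A_{i,1}$, and $|I|^{1/h}+|J|^{1/h}$ is neither linear nor convex in the state.
\item $\tau(X_i)=0$ is not an extremal value, since the spectrum of $X_i$ lies in $[-2,2]$. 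Faithfulness only turns $\tau(P)=0$ with $P\ge 0$ into $P=0$.
\end{itemize}
In fact every condition you derive is first-order and can hold in an abelian $\mathcal M_{A_i}$ (e.g.\ $A_{i,x}^2=I$, $X_i=2A_{i,0}A_{i,1}\neq 0$, $\tau(X_i)=0$), where Theorem \ref{the3.2} excludes any violation.

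The missing input is the equality case of Cauchy--Schwarz in (\ref{E1}), which you never use for necessity. Write $U_x=\prod_{j\notin\{r_1,\dots,r_h\}}A_{j,x}$ and $Y_\pm=\prod_{i\in\{r_1,\dots,r_h\}}(A_{i,0}\pm A_{i,1})$.
\begin{itemize}
\item Equality in (\ref{E1}) gives $\pi_\tau(U_0)\Omega=k_0\pi_\tau(Y_+)\Omega$, as in (\ref{eq4.2}).
\item Applying $\pi_\tau(U_0)$ and using $U_0^2=I$ and $U_0Y_+=Y_+U_0$ yields $\Omega=k_0^2\pi_\tau(Y_+^2)\Omega$.
\item Hence $k_0^2=2^{-h}$ and $\pi_\tau(f)\Omega=0$ for $f=\prod_i(2I+X_i)-2^hI$.
\end{itemize}
The network factorization together with $\tau(X_i)=0$ then gives
\[
0=\tau(f^2)=\prod_{i\in\{r_1,\dots,r_h\}}\bigl(4+\tau(X_i^2)\bigr)-4^h ,
\]
so $\tau(X_i^2)=0$, and faithfulness gives $X_i=0$. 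This is exactly the paper's argument.

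Your sufficiency plan does not prove the stated implication either. The observables are given, so you may not choose them. The non-independent observables must lie in their own algebras, which commute with $\mathcal M_{A_i}$, so they cannot be drawn from the $M_2(\mathbb C)$ generated by $A_{i,0},A_{i,1}$. Whether $\pi_\tau(U_x)\Omega$ is parallel to $\pi_\tau(Y_\pm)\Omega$ depends on $\tau$.

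In fact the listed identities do not force equality in (\ref{E1}). Take the tensor-product model with every $\mathcal M_{A_k}=M_2(\mathbb C)$ and let $\tau$ be the normalized trace, which is faithful and fully factorizing, hence a network state. Let the independent parties use the Pauli matrices $\sigma_x,\sigma_z$ and the others any self-adjoint unitaries. All hypotheses hold, yet $I_\tau=J_\tau=0$.

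So the sufficiency direction, which the paper calls obvious, needs the alignment conditions (\ref{eq4.2})--(\ref{eq4.3}) as extra hypotheses. With them it is a direct computation: $I_\tau=\langle\pi_\tau(Y_+)\Omega,\pi_\tau(U_0)\Omega\rangle=k_0\,\tau(Y_+^2)=k_0 2^h$ with $|k_0|=2^{-h/2}$, so $|I_\tau|^{1/h}=\sqrt2$, and likewise for $J_\tau$.
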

\begin{proof}
If $|I_\tau|^{\frac{1}{h}}+|J_\tau|^{\frac{1}{h}}=2\sqrt{2}$, according to the proof process of Theorem \ref{the3.1}, the conditions that the operators corresponding to each party should satisfy are as follows:

(1) To ensure the equality in Ineq. (\ref{E1}) holds, condition 
\begin{align}
\pi_{\tau}\Biggl(\prod_{j\in\{1,2,\dots,m\}\setminus\{r_1,r_2,\dots,r_h\}}A_{j,0}\Biggr)\Omega=k_0\pi_{\tau}\Biggl(\prod_{i\in\{r_1,r_2,\dots,r_h\}}(A_{i,0}+A_{i,1})\Biggr)\Omega,\label{eq4.2} \\
\pi_{\tau}\Biggl(\prod_{j\in\{1,2,\dots,m\}\setminus\{r_1,r_2,\dots,r_h\}}A_{j,1}\Biggr)\Omega=k_1\pi_{\tau}\Biggl(\prod_{i\in\{r_1,r_2,\dots,r_h\}}(A_{i,0}-A_{i,1})\Biggr)\Omega \label{eq4.3}
\end{align}
must be satisfied.

(2) For the equality in the Ineq. (\ref{E2}) to hold,
condition 
\begin{align}
&\Biggl|\Biggl|\pi_{\tau}\Biggl(\prod_{j\in\{1,2,\dots,m\}\setminus\{r_1,r_2,\dots,r_h\}}A_{j,0}\Biggr)\Omega\Biggr|\Biggr|=1, \label{eq4.4}\\
&\Biggl|\Biggl| \pi_{\tau}\Biggl(\prod_{j\in\{1,2,\dots,m\}\setminus\{r_1,r_2,\dots,r_h\}}A_{j,1}\Biggr)\Omega\Biggr|\Biggr|=1\label{eq4.5}
\end{align}
must be satisfied.

(3) 
To ensure the equality in Ineqs. (\ref{g35}) and (\ref{g36}) holds, condition 
\begin{align} \label{E5}
(A_{i,0})^2+(A_{i,1})^2=2I 
\end{align}
must obviously hold for any $i\in\{r_1,r_2,\dots,r_h\}$.

(4) For the Ineq. (\ref{g38}), when condition $\theta_{r_1}=\theta_{r_2}=...=\theta_{r_h}$ holds, the equality holds.  Let $\theta_{r_1}=\theta_{r_2}=...=\theta_{r_h}=\theta$. This means that $\tau(A_{i,0}A_{i,1}+A_{i,1}A_{i,0})=4{\rm sin}^2\theta-2=\tau(X)$ for any $i\in\{r_1,r_2,\dots,r_h\}$.

(5) 
To ensure the equality in Ineq. (\ref{E3}) holds, condition 
\begin{align}\label{E47}
\tau\Biggl[\prod_{i\in\{r_1,r_2,\dots,r_h\}}(A_{i,0}+A_{i,1})^2\Biggr]=\tau\Biggl[\prod_{i\in\{r_1,r_2,\dots,r_h\}}(A_{i,0}-A_{i,1})^2\Biggr]
\end{align}
must hold. It follows from Eqs. (\ref{E5}) and  (\ref{E47})
that  we get 
\begin{align}\label{E8}
\tau(A_{i,0}A_{i,1}+A_{i,1}A_{i,0})=0
\end{align}
for any $i\in\{r_1,r_2,\dots,r_h\}$.

(6)  It follows from H\"older inequality that the equality in Ineq. (\ref{E3}) holds when
\begin{align}
\tau[(A_{r_1,0}+A_{r_1,1})^2]=\lambda_2(\tau[(A_{r_2,0}+A_{r_2,1})^2])=\cdots=\lambda_{h}(\tau[(A_{r_h,0}+A_{r_h,1})^2]), \label{eq4.6}\\
\tau[(A_{r_1,0}-A_{r_1,1})^2]=\lambda_2(\tau[(A_{r_2,0}-A_{r_2,1})^2])=\cdots=\lambda_{h}(\tau[(A_{r_h,0}-A_{r_h,1})^2]). \label{eq4.7}
\end{align}
It follows from Eqs. (\ref{E5}) and (\ref{E8}) that we can obtain 
\begin{align}
&\lambda_2=
\lambda_3=\cdots=\lambda_h=1, \label{eq4.9} \\
&\tau(A_{u,0}A_{u,1}+A_{u,0}A_{u,1})=\tau(A_{v,0}A_{v,1}+A_{v,0}A_{v,1})=\tau(X)\label{eq4.10}
\end{align}
for any $u,v\in\{i_1,i_2,...,i_h\}$.

(7) Equality in the Ineq. (\ref{E7}) implies that one can get $(A_{i,0})^2=(A_{i,1})^2=I$ because $-I\le A_{i,0},A_{i,1}\le I$ for any $i\in\{r_1,r_2,...,r_h\}$, therefore $\tau((A_{i,0})^2A_i)=\tau(A_i)$ and  $\tau((A_{i,1})^2A_i)=\tau(A_i)$ for any $A_i\in\mathcal M_{A_i}$.

Combining the Eqs. (\ref{eq4.2})-(\ref{eq4.5}), one can get
\begin{align}
k_0^2\tau\Biggl[\prod_{i\in\{r_1,r_2,\dots,r_h\}}(A_{i,0}+A_{i,1})^2\Biggr]=1, \\
k_1^2\tau\Biggl[\prod_{i\in\{r_1,r_2,\dots,r_h\}}(A_{i,0}-A_{i,1})^2\Biggr]=1.
\end{align}
Using the fact of Eq. (\ref{eq4.10}), we have
\begin{align}
|k_0|=\frac{1}{[2+\tau(X)]^{\frac{h}{2}}}, \label{eq4.13}\\
|k_1|=\frac{1}{[2-\tau(X)]^{\frac{h}{2}}}. \label{eq4.14}
\end{align}
It follows from Eqs. (\ref{eq4.2})-(\ref{eq4.5}), (\ref{eq4.13}) and (\ref{eq4.14}) that we get 
\begin{align}
 \frac{1}{(2+\tau(X))^{\frac{h}{2}}}(2+\tau(X))^{h}=
  \frac{1}{(2-\tau(X))^{\frac{h}{2}}}(2-\tau(X))^{h}.
\end{align}
This means that  $\tau(X)=0$ and 
$k_0^2=k_1^2=\frac{1}{2^h}$.   Obviously, regarding result $\tau({X})$, it is consistent with Eq. (\ref{E8}).
 
Since 
\begin{align}
&\Biggl|\Biggl\langle\Omega,\pi_\tau\Biggl(\prod_{j\in\{1,2,\dots,m\}\setminus\{r_1,r_2,\dots,r_h\}}(A_{j,0})^2\Biggr)\Omega\Biggr\rangle \Biggr| \nonumber \\
=&\Biggl|\frac{1}{2^h}\Biggl\langle\Omega,
\pi_\tau\Biggl(
\prod_{i\in\{r_1,r_2,\dots,r_h\}}(A_{i,0}+A_{i,1})^2
\Biggr)\Omega\Biggr\rangle\Biggr| \nonumber \\
\le&\frac{1}{2^h}\sqrt{\Biggl|\Biggl| \pi_\tau\Biggl(
\prod_{i\in\{r_1,r_2,\dots,r_h\}}(A_{i,0}+A_{i,1})
\Biggr)\Omega\Biggr|\Biggr|}\sqrt{\Biggl|\Biggl| \pi_\tau\Biggl(
\prod_{i\in\{r_1,r_2,\dots,r_h\}}(A_{i,0}+A_{i,1})
\Biggr)\Omega\Biggr|\Biggr|}\nonumber \\
=&\frac{1}{2^h}\prod_{i\in\{r_1,r_2,\dots,r_h\}}\tau[(A_{i,0}+A_{i,1})^2]\nonumber \\
=&1,
\end{align}
and
\begin{align}
\tau\Biggl(\prod_{j\in\{1,2,\dots,m\}\setminus\{r_1,r_2,\dots,r_h\}}(A_{j,0})^2\Biggr)=
\Biggl|\Biggl\langle\Omega,\pi_\tau\Biggl(\prod_{j\in\{1,2,\dots,m\}\setminus\{r_1,r_2,\dots,r_h\}}(A_{j,0})^2\Biggr)\Omega\Biggr\rangle \Biggr|=1.
\end{align}
It means that $\pi_\tau(\prod_{j\in\{1,2,\dots,m\}\setminus\{r_1,r_2,\dots,r_h\}}(A_{j,0})^2)\Omega=\Omega$. Then, we get 
\begin{align}
&\tau(\prod_{j\in\{1,2,\dots,m\}\setminus\{r_1,r_2,\dots,r_h\}}(A_{j,0})^2A)=\langle\Omega, \pi_{\tau}(\prod_{j\in\{1,2,\dots,m\}\setminus\{r_1,r_2,\dots,r_h\}}(A_{j,0})^2)\pi_{\tau}(A)\Omega\rangle
\nonumber \\
=&
\langle \pi_{\tau}(\prod_{j\in\{1,2,\dots,m\}\setminus\{r_1,r_2,\dots,r_h\}}(A_{j,0})^2)\Omega, \pi_{\tau}(A)\Omega\rangle \nonumber \\
=&\langle \Omega, \pi_{\tau}(A)\Omega\rangle\nonumber \\
=&\tau(A).
\end{align}
Similarly, we can also obtain
$\tau(\prod_{j\in\{1,2,\dots,m\}\setminus\{r_1,r_2,\dots,r_h\}}(A_{j,1})^2A)=\tau(A)$ for any $A$ belonging to the von Neumann algebra generated by the non-independent parties. Furthermore, according to $\pi_\tau(\prod_{j\in\{1,2,\dots,m\}\setminus\{r_1,r_2,\dots,r_h\}}(A_{j,0})^2)\Omega=\Omega$ and Eq. (\ref{eq4.2}), we can obtain that 
\begin{align}
    \pi_{\tau}(f(X_{r_1},X_{r_2},...,X_{r_h}))\Omega=0
\end{align}
holds. Here $f(X_{r_1},X_{r_2},...,X_{r_h})=\prod_{i=1}^h(2I+X_{r_i})-2^hI$ with $X_{r_i}=A_{r_i,0}A_{r_i,1}+A_{r_i,1}A_{r_i,0}$.
Thus, $\tau[(f(X_{r_1},X_{r_2},...,X_{r_h}))^2]=0$. We can infer that $\tau(X_{r_i}^2)=0$ for any $i\in\{1,2...,h\}$.

Combining the faithfulness, non-negativity of the state $\tau$ and the self-adjointness of $X_{r_i}$ for any $i\in\{1,2...,h\}$. Then $X_{r_i}=0$, i.e., $$A_{r_i,0}A_{r_i,1}+A_{r_i,1}A_{r_i,0}=0,$$ and implies that $\tau[(A_{r_i,0}A_{r_i,1}+A_{r_i,1}A_{r_i,0})A_{r_i}]=0$.

To prove sufficiency, we can calculate the left-hand side of Ineq. (\ref{eq3.2}) when $\tau((A_{i,0})^2A_i)=\tau(A_i),\tau((A_{i,1})^2A_i)=\tau(A_i)$ and $\tau[(A_{i,0}A_{i,1}+A_{i,1}A_{i,0})A_i]=0$  for any $A_i\in\mathcal{M}_{A_i}$,  $i\in\{r_1,r_2,...,r_h\}$, and $\tau(\prod_{j\in\{1,2,\dots,m\}\setminus\{r_1,r_2,\dots,r_h\}}(A_{j,0})^2A)=\tau(\prod_{j\in\{1,2,\dots,m\}\setminus\{r_1,r_2,\dots,r_h\}}(A_{j,1})^2A)=\tau(A)$ for any $A$ belongs to the von Neumann algebra generated by the non-independent parties.  Obviously, $|I_\tau|^{\frac{1}{h}}+|J_\tau|^{\frac{1}{h}}=2\sqrt{2}$ can be obtained.

\end{proof}

Based on the conclusion of Theorem \ref{thm4.1},  we want to obtain the algebraic structure of the system when the Bell-type inequality achieves its maximal violation. In the Corollary \ref{co4.2}, we can more intuitively illustrate the requirements imposed on the algebraic structure when the Bell-type inequality reaches its maximal quantum violation.

\begin{corollary}\label{co4.2}
For the mutually commuting von Neuamnn algebra models of the quantum network $\Xi(n,m)$ with the maximal independent number $h_{\rm max}$, the Bell-type inequality can be maximally violated if and only if $\mathcal M_{A_{r_1}}, \mathcal M_{A_{r_2}},...,\mathcal M_{A_{r_h}}$ contain subalgebras isomorphic to \( M_2(\mathbb C) \) and there exists a faithful state
$\tau\in \mathcal M_{A_1A_2\cdots A_m}^*$
that satisfies
\begin{align}\label{con4}
\tau(A_{r_1}A_{r_2}\cdots A_{r_h})=\tau(A_{r_1})\tau(A_{r_2})\cdots \tau(A_{r_h}),
\end{align}
for all $(A_{r_1}, A_{r_2}, \cdots A_{r_h})\in \mathcal{M}_{A_{r_1}}\times \mathcal{M}_{A_{r_2}} \cdots \times\mathcal{M}_{A_{r_h}}$. Here, $2\leq h \leq h_{\rm max}$  and $\{r_1,r_2,...,r_h\}
\in \Gamma(n,m,h^{D_h})$.
\end{corollary}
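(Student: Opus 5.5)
The plan is to prove the two directions of Corollary~\ref{co4.2} separately, using Theorem~\ref{thm4.1} for necessity and an explicit construction for sufficiency.

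\emph{Necessity.} Suppose $|I_\tau|^{\frac1h}+|J_\tau|^{\frac1h}=2\sqrt2$ for some faithful network state $\tau$ and suitable observables. The factorization (\ref{con4}) is then part of the definition of a network state (Definition~\ref{def2.2}), so only the matrix-unit structure needs proof. For each $i\in\{r_1,\dots,r_h\}$, Theorem~\ref{thm4.1} and its proof give:
\begin{itemize}
\item $A_{i,0}^2=A_{i,1}^2=I$;
\item $A_{i,0}A_{i,1}+A_{i,1}A_{i,0}=0$, which uses faithfulness of $\tau$ through $\tau(X_{r_i}^2)=0$.
\end{itemize}
So $A_{i,0}$ and $A_{i,1}$ are anticommuting self-adjoint unitaries in $\mathcal M_{A_i}$. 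Next, set
\[
e_{11}=\tfrac12(I+A_{i,0}),\quad e_{22}=\tfrac12(I-A_{i,0}),\quad e_{12}=e_{11}A_{i,1}e_{22}=A_{i,1}e_{22},\quad e_{21}=e_{12}^*.
\]
Anticommutation gives $A_{i,1}e_{22}=e_{11}A_{i,1}$, and from this a routine check shows $e_{12}e_{21}=e_{11}$, $e_{21}e_{12}=e_{22}$ and $e_{11}+e_{22}=I$. These are nonzero matrix units, because $e_{11}=0$ would force $A_{i,0}=-I$, which cannot anticommute with a unitary. Hence $\mathrm{span}\{e_{kl}\}\cong M_2(\mathbb C)$ is a unital subalgebra of $\mathcal M_{A_i}$.

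\emph{Sufficiency.} Conversely, suppose each $\mathcal M_{A_{r_i}}$ contains a copy of $M_2(\mathbb C)$ with matrix units $e^{(i)}_{kl}$, and $\tau$ is faithful and satisfies (\ref{con4}).
\begin{itemize}
\item For each independent party, choose the Pauli-type observables
\[
\sigma_z^{(i)}=e^{(i)}_{11}-e^{(i)}_{22},\qquad \sigma_x^{(i)}=e^{(i)}_{12}+e^{(i)}_{21},
\]
and take $A_{i,0}=(\sigma_z^{(i)}+\sigma_x^{(i)})/\sqrt2$ and $A_{i,1}=(\sigma_z^{(i)}-\sigma_x^{(i)})/\sqrt2$. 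These satisfy $A_{i,x}^2=I$ and $A_{i,0}A_{i,1}+A_{i,1}A_{i,0}=0$.
\item For the non-independent parties, take $A_{j,x}=\pm I$. Then $A_{j,x}^2=I$, so every hypothesis of Theorem~\ref{thm4.1} holds trivially, and the sufficiency half of Theorem~\ref{thm4.1} yields the value $2\sqrt2$.
\end{itemize}
Because the sufficiency argument in Theorem~\ref{thm4.1} is only sketched, I would check the value directly. With these choices, $A_{i,0}+A_{i,1}=\sqrt2\,\sigma_z^{(i)}$ and $A_{i,0}-A_{i,1}=\sqrt2\,\sigma_x^{(i)}$. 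By (\ref{con4}),
\[
I_\tau=2^{h/2}\prod_i\tau(\sigma_z^{(i)}),\qquad J_\tau=\pm\,2^{h/2}\prod_i\tau(\sigma_x^{(i)}).
\]
Reaching $2\sqrt2$ therefore requires $|\tau(\sigma_z^{(i)})|^{1/h}$ and $|\tau(\sigma_x^{(i)})|^{1/h}$ products that each equal $1$, i.e.\ $|\tau(\sigma_z^{(i)})|=|\tau(\sigma_x^{(i)})|=1$.

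\emph{Main obstacle.} This last requirement is the hardest step, and I expect it to fail as stated. A state with $|\tau(\sigma_z)|=1$ is an eigenstate of $\sigma_z$, and then $\tau(\sigma_x)=0$. So a product over the independent parties cannot give $2\sqrt2$; in fact the value is at most about $2$. The remedy I would try first is to use the non-independent parties, which may share sources with the independent ones:
\begin{itemize}
\item choose $A_{j,x}$ so that $\tau$ correlates $\sigma_z^{(i)}$ with $A_{j,0}$ and $\sigma_x^{(i)}$ with $A_{j,1}$;
\item model this on the chain/star constructions of \cite{12}, with $\tau$ a product of maximally entangled (tracial) states on pairs of $M_2$-copies carried along the sources;
\item impose (\ref{con4}) only across the independent set, and obtain faithfulness by a convex combination with a faithful state on the complement, or by working in the standard form.
\end{itemize}
I would present this construction as the sufficiency argument, and note explicitly that it needs the $M_2(\mathbb C)$ copies to be paired through shared sources, not just the factorization (\ref{con4}).
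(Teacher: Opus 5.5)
Your necessity direction is the paper's argument. Theorem~\ref{thm4.1} together with faithfulness gives $A_{i,0}^2=A_{i,1}^2=I$ and $A_{i,0}A_{i,1}+A_{i,1}A_{i,0}=0$, and your explicit matrix units justify $M_2(\mathbb C)\subset\mathcal M_{A_i}$ more cleanly than the paper's span identity.

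For sufficiency, the paper does exactly what your first bullet does. It takes anticommuting symmetries in each copy of $M_2(\mathbb C)$, gives the non-independent parties observables squaring to $I$, and appeals to the sufficiency half of Theorem~\ref{thm4.1}, which the paper asserts without computation. Your direct check correctly shows this cannot produce $2\sqrt2$, and the bound is in fact exactly $2$. For $c^2+s^2=1$ the element $c\sigma_z^{(i)}+s\sigma_x^{(i)}$ is a symmetry, so $\tau(\sigma_z^{(i)})^2+\tau(\sigma_x^{(i)})^2\le 1$. H\"older then bounds $\prod_i|\tau(\sigma_z^{(i)})|^{1/h}+\prod_i|\tau(\sigma_x^{(i)})|^{1/h}$ by $\sqrt2$.

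The gap you left open cannot be closed, because the ``if'' direction is false as stated. Take the bilocal chain $A_1$--$A_2$--$A_3$ with independent set $\{1,3\}$, and set $H=\mathbb C^2\otimes\mathbb C^2$, $\mathcal M_{A_1}=M_2(\mathbb C)\otimes I$, $\mathcal M_{A_3}=I\otimes M_2(\mathbb C)$, $\mathcal M_{A_2}=\mathbb C I$, with $\tau$ the normalized trace. Every hypothesis holds. Yet $A_{2,x}=\pm I$ and the factorization of any network state give $\mathcal S_\tau\le\prod_{i\in\{1,3\}}\bigl(|\tau(A_{i,0}+A_{i,1})|+|\tau(A_{i,0}-A_{i,1})|\bigr)^{1/2}\le 2$ for all observables.

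So your remedy, which makes the non-independent parties carry the entanglement, proves a different statement. It needs extra hypotheses on those parties, and also a network state that is a product across all of $\mathcal M_{A_{r_1}},\dots,\mathcal M_{A_{r_h}}$ yet entangled along the sources. Outside the tensor-product case, the existence of such a state needs its own argument.

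The faithfulness patch should also be dropped. Mixing with a faithful state destroys (\ref{con4}), since a convex combination of product states is not a product state. No other device rescues it either, because a faithful network state never attains $2\sqrt2$ exactly.

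To see this, put $R=\{r_1,\dots,r_h\}$, $P=\prod_{i\in R}(A_{i,0}+A_{i,1})$, $U_0=\prod_{j\notin R}A_{j,0}$ and $t_i=\tau(A_{i,0}A_{i,1}+A_{i,1}A_{i,0})$.
\begin{itemize}
\item Cauchy--Schwarz and factorization give $|I_\tau|\le\prod_i(2+t_i)^{1/2}$ and $|J_\tau|\le\prod_i(2-t_i)^{1/2}$.
\item H\"older with $\sqrt{2+t}+\sqrt{2-t}\le 2\sqrt2$ then shows that $\mathcal S_\tau=2\sqrt2$ forces every $t_i=0$. It also forces $\pi_\tau(U_0)\Omega=c\,\pi_\tau(P)\Omega$ with $|c|=2^{-h/2}$, and faithfulness upgrades this to the operator identity $U_0=cP$.
\item Now $Y=A_{r_1,0}-A_{r_1,1}$ commutes with $U_0$ but anticommutes with $P$, because $(A_0+A_1)(A_0-A_1)=-(A_0-A_1)(A_0+A_1)$ whenever $A_0^2=A_1^2=I$.
\item Hence $YU_0=0$, so $Y=0$ and $t_{r_1}=2$, a contradiction.
\end{itemize}
Thus the hypothesis of Theorem~\ref{thm4.1} is never satisfied, and your necessity argument, like the paper's, is valid only vacuously. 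A correct sufficiency statement must either allow a non-faithful violating state (your product of maximally entangled states is the natural witness) or read ``maximal violation'' as a supremum.
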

\begin{proof}
$(\Leftarrow)$  For any $i\in\{r_1,r_2,...,r_h\}$, suppose $\mathcal{M}_{A_{i}}$  contains a subalgebra $\mathcal{M}_{A_{i}}^{sub}\simeq M_2(\mathbb{C})$  and $\tau\in \mathcal M_{A_1A_2\cdots A_m}^*$ satisfies Eq. (\ref{con4}). 
Then there exist operators $A_{i,0},A_{i,1},$ and $A_{i,2}:=-\frac{{\rm{i}}}{2}[A_{i,0},A_{i,1}]$ in $\mathcal{M}_{A_i}^{sub}$ such that they anticommute and $(A_{i,j})^2=I$ for $j\in\{0,1,2\}$, where ${\rm{i}}^2=-1$. Consequently, we obtain $$\tau((A_{i,0})^2A_i)=\tau(A_i), \,\,\tau((A_{i,1})^2A_i)=\tau(A_i),\ \tau[(A_{i,0}A_{i,1}+A_{i,1}A_{i,0})A_i]=0$$ for any $A_i\in\mathcal{M}_{A_i}$. 

Setting 
$(\prod_{j\in\{1,2,\dots,m\}\setminus\{r_1,r_2,\dots,r_h\}}A_{j,0})^2=(\prod_{j\in\{1,2,\dots,m\}\setminus\{r_1,r_2,\dots,r_h\}}A_{j,1})^2=I$ gives $$\tau(\prod_{j\in\{1,2,\dots,m\}\setminus\{r_1,r_2,\dots,r_h\}}(A_{j,0})^2A)=\tau(\prod_{j\in\{1,2,\dots,m\}\setminus\{r_1,r_2,\dots,r_h\}}(A_{j,1})^2A)=\tau(A)$$  for any $A$ belonging to the von Neumann algebra generated by the non-independent parties.

 By Theorem \ref{thm4.1}, these operators yield the maximal violation $2\sqrt{2}$ of the quantity $\mathcal{S}_\tau$ in Eq.\,(\ref{biin}). 

$(\Rightarrow)$ Conversely, assume $\mathcal{S}_\tau$ in Eq.\,(\ref{biin}) attains the maximal violation $2\sqrt{2}$. It follows from Theorem \ref{thm4.1} that we have
$\tau((A_{i,0})^2A_i)=\tau(A_i),\tau((A_{i,1})^2A_i)=\tau(A_i)$ and $\tau[(A_{i,0}A_{i,1}+A_{i,1}A_{i,0})A_i]=0$ for any $A_i\in\mathcal{M}_{A_i}$ and $i\in\{r_1,r_2,...,r_h\}$.  Furthermore, $\tau(\prod_{j\in\{1,2,\dots,m\}\setminus\{r_1,r_2,\dots,r_h\}}(A_{j,0})^2A)=\tau(\prod_{j\in\{1,2,\dots,m\}\setminus\{r_1,r_2,\dots,r_h\}}(A_{j,1})^2A)=\tau(A)$ for any $A$ belongs to the von Neumann algebra generated by the non-independent parties.

Taking $A_i=A_{i,0}A_{i,1}+A_{i,1}A_{i,0}$, it follows from the faithfulness of state $\tau$ that  $\tau[(A_{i,0}A_{i,1}+A_{i,1}A_{i,0})^2]=0$, one gets $$A_{i,0}A_{i,1}+A_{i,1}A_{i,0}=0,$$ i.e., $A_{i,0}A_{i,1}=-A_{i,1}A_{i,0}$. The algebra generated by $A_{i,0}, A_{i,1}$ is $$\mathcal{U}(A_{i,0}, A_{i,1}):=\{\sum_k \alpha_k A_{i,0}^sA_{i,1}^t|\alpha_k\in\mathbb C,s,t\in\mathbb N\}.$$ 
From $\tau(A_{i,0}^2A_i)=\tau(A_{i,1}^2A_i)=\tau(A_i)$ and $-I\le A_{i,0}, A_{i,1}\le I$, setting $A_i=I$ gives $\tau(I-(A_{i,0})^2)=\tau(I-(A_{i,1})^2)=0$. Faithfulness of state $\tau$ then implies $(A_{i,0})^2=(A_{i,1})^2=I$.  
Because $(A_{i,0})^2=(A_{i,1})^2=I$, one gets $\sum_k \alpha_k A_{i,0}^sA_{i,1}^t=\alpha_0 I+\alpha_1 A_{i,0}+\alpha_2A_{i,1}+\alpha_3A_{i,0}A_{i,1}$.
Hence,
$$\mathcal{U}(A_{i,0},A_{i,1})={\rm span}\{I, A_{i,0},A_{i,0},-\frac{{\rm i}}{2}[A_{i,0},A_{i,1}]\}\simeq M_2(\mathbb C).$$ 
So this proof is completed.
\end{proof}

Obviously, Corollary \ref{co4.2} further clarifies the von Neumann algebra structure corresponding to the independent parties in the MCvNA model of the quantum network when the maximal violation of a Bell-type inequality is achieved, and that the state can be faithful at this point.  Meanwhile, it can also be seen that the necessary and sufficient conditions for achieving the maximal quantum violation at this point impose no restrictions on the algebraic structures corresponding to the non-independent parties. It is worth noting that, since the choice of independent parties is not unique, one can select as many sets of independent parties as possible and determine the structure of certain von Neumann algebras by achieving the maximal violation of  Ineq. (\ref{biin}). The relationship can be more clear in the tensor product algebra model. 

\begin{corollary}\label{co4.3}
For the tensor product algebra models of the quantum network $\Xi(n,m)$ with the maximal independent number $h_{\rm max}$,  assume that each $\mathcal M_{A_{r_1}}$ is factor. Then the Bell-type inequality cannot be maximally violated if and only if at least one of $\mathcal M_{A_{r_1}}, \mathcal M_{A_{r_2}},...,\mathcal M_{A_{r_h}}$ is type I$_{2k+1}$ for some  finite positive integer $k$.
\end{corollary}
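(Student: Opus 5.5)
The plan is to reduce the statement to Corollary \ref{co4.2}. That corollary says maximal violation occurs if and only if each $\mathcal M_{A_{r_i}}$ contains a unital copy of $M_2(\mathbb C)$ and a suitable faithful network state exists. What remains is a purely algebraic question: which factors fail to contain a unital copy of $M_2(\mathbb C)$?

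First I will settle the state side in the tensor product algebra model. There $\mathcal M_{A_1A_2\cdots A_m}\simeq \mathcal M_{A_1}\otimes\cdots\otimes\mathcal M_{A_m}$, so for a network state I take a product $\tau=\tau_1\otimes\cdots\otimes\tau_m$ of faithful normal states $\tau_j$ on each factor.
\begin{itemize}
\item Such states exist when the factors are $\sigma$-finite, which I will assume, as is implicit in the paper's setting.
\item The product state satisfies the factorization condition (\ref{con4}) automatically.
\item It is faithful on the von Neumann tensor product, because a tensor product of faithful normal states is faithful.
\end{itemize}
Hence the existence of the state in Corollary \ref{co4.2} is never an obstruction. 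Maximal violation is therefore equivalent to: every $\mathcal M_{A_{r_i}}$ contains a subalgebra isomorphic to $M_2(\mathbb C)$ sharing the unit. I read the corollary this way, since its proof uses $(A_{i,0})^2=I$.

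The core step is the following dichotomy for a factor $\mathcal M$: it contains a unital copy of $M_2(\mathbb C)$ if and only if $\mathcal M$ is not of type I$_n$ with $n$ odd.
\begin{itemize}
\item \textbf{Odd type I.} Suppose $\mathcal M\simeq M_n(\mathbb C)$ with $n=2k+1$. A unital copy of $M_2(\mathbb C)$ makes $\mathbb C^n$ a unital $M_2(\mathbb C)$-module. Every such module is a direct sum of copies of $\mathbb C^2$, so $2\mid n$, a contradiction. Equivalently, the two matrix units $e_{11},e_{22}$ would be equivalent projections summing to $I$, forcing $n$ to be even.
\item \textbf{Even type I$_n$, or type I$_\infty$.} The unit can be halved: write $I=p+q$ with $p\sim q$. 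A partial isometry $v$ with $v^*v=p$ and $vv^*=q$ then generates matrix units $p,\,q,\,v,\,v^*$, giving a unital $M_2(\mathbb C)$.
\item \textbf{Types II$_1$, II$_\infty$, III.} Here the halving lemma applies: in a factor with no minimal projections, every projection, in particular $I$, splits into two equivalent subprojections. For II$_1$ use a projection of trace $1/2$. For II$_\infty$ and III, use the standard halving of infinite projections, or properly infinite ones in type III. Again we obtain unital matrix units.
\end{itemize}
One caveat: the trivial factor $\mathbb C$, i.e. $k=0$, is also covered by the odd case, consistent with Theorem \ref{the3.2}.

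Combining these steps gives the result. The inequality fails to be maximally violated if and only if some $\mathcal M_{A_{r_i}}$ lacks a unital $M_2(\mathbb C)$, which holds if and only if some $\mathcal M_{A_{r_i}}$ is type I$_{2k+1}$.

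I expect the main obstacle to be the halving step in the type III, and non-$\sigma$-finite, cases. There I would cite the comparison theory in Kadison--Ringrose (Ch. 6): every properly infinite projection in a von Neumann algebra is the sum of two equivalent projections each equivalent to it. A secondary subtlety is that the paper's Corollary \ref{co4.2} does not explicitly say the $M_2(\mathbb C)$ must be unital. I would make this explicit, since the "only if" direction above uses unitality.
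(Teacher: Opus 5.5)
This is the route the paper implicitly intends; it gives no separate proof. Your algebraic lemma is correct: a factor contains a unital copy of $M_2(\mathbb C)$ iff it is not of type I$_n$ with $n$ odd, with $n=1$ included, so you are right that $k=0$ must be allowed. The gap is the bridge from this lemma to $\mathcal S_\tau$, which you import wholesale from Corollary~\ref{co4.2} with the remark that the state ``is never an obstruction''.

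Your own witness refutes that remark. For $\tau=\tau_1\otimes\cdots\otimes\tau_m$ one has $I_\tau=\prod_i\tau_i(A_{i,0}+A_{i,1})\prod_j\tau_j(A_{j,0})$, and likewise for $J_\tau$. H\"older then gives $\mathcal S_\tau\le\prod_i\bigl(|\tau_i(A_{i,0}+A_{i,1})|+|\tau_i(A_{i,0}-A_{i,1})|\bigr)^{1/h}\le 2$. This holds even though this faithful factorizing state, with anticommuting symmetries, satisfies every condition listed in Theorem~\ref{thm4.1}.

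A maximizing state must saturate (\ref{E1}), i.e.\ satisfy (\ref{eq4.2})--(\ref{eq4.3}). In the tensor product model that is incompatible with faithfulness on $\mathcal M_{A_1A_2\cdots A_m}$. Indeed, $\pi_\tau(Y)\Omega=0$ gives $\tau(Y^*Y)=0$, hence $Y=0$. So (\ref{eq4.2}) becomes the operator identity $\prod_jA_{j,0}=k_0\prod_i(A_{i,0}+A_{i,1})$ between elements of complementary tensor legs, and both sides must be scalars. Together with (\ref{eq4.3}), every $A_{i,0}\pm A_{i,1}$ is then a nonzero scalar, which is impossible for observables with spectrum $\{1,-1\}$. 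Thus with faithful states the inequality is never maximally violated in this model, even with $M_2(\mathbb C)$ in every independent slot. Your sufficiency direction cannot be obtained through Corollary~\ref{co4.2}.

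Dropping faithfulness, as the statement does, breaks the other direction. The forward half of Corollary~\ref{co4.2} rests on Theorem~\ref{thm4.1}, whose hypothesis is that the maximizing state is faithful. Here is a counterexample in the bilocal network $A$--$B$--$C$:
\begin{itemize}
\item $\mathcal M_A=M_3(\mathbb C)$, $\mathcal M_B=M_4(\mathbb C)$, $\mathcal M_C=M_2(\mathbb C)$;
\item $A_0=\tfrac{1}{\sqrt2}(\sigma_z+\sigma_x)\oplus1$ and $A_1=\tfrac{1}{\sqrt2}(\sigma_z-\sigma_x)\oplus1$ on $\mathbb C^2\oplus\mathbb C$, with $C_x$ the same qubit observables;
\item $B_0=\sigma_z\otimes\sigma_z$ and $B_1=\sigma_x\otimes\sigma_x$;
\item two Bell pairs, with $A$'s half in $\mathbb C^2\oplus0$.
\end{itemize}
The $A$--$C$ marginal is a product and $I_\tau=J_\tau=2$, so $\mathcal S_\tau=2\sqrt2$, although $\mathcal M_A$ is of type I$_3$. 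Maximal violation only forces $A_{i,x}^2=I$ and $A_{i,0}A_{i,1}+A_{i,1}A_{i,0}=0$ on the support of $\tau$, not in $\mathcal M_{A_i}$. So your odd-dimension argument never gets started.

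Your proof therefore uses the faithful reading for necessity and the unrestricted reading for sufficiency. Under either single reading, one direction of the statement fails.

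What does survive is the following. Suppose $\tau$ is faithful on each $\mathcal M_{A_{r_i}}$ (rather than on the joint algebra) and attains $2\sqrt2$. The equality analysis, together with factorization on the independent parties, gives $\prod_i\bigl(4+\tau(X_{r_i}^2)\bigr)=4^h$ for $X_{r_i}=A_{r_i,0}A_{r_i,1}+A_{r_i,1}A_{r_i,0}$. Hence $X_{r_i}=0$, and your lemma then excludes type I$_{2k+1}$ for $k\ge0$. The converse needs an explicit entangled construction plus hypotheses on the non-independent algebras; it fails, for example, if these are all $\mathbb C$.
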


Although in non-relativistic quantum information theory, when seeking maximal violations of Bell inequalities in networks, the main focus is always on finding numerically optimal measurements-a task that can be accomplished using optimization algorithms such as SDP- Corollary \ref{co4.3} indicates under what conditions such optimal measurements can exist.

\section{Conclusions and discussions}

In this paper, our main contribution is to establish a more abstract von Neumann algebra framework for quantum networks with arbitrary structure. This provides a new perspective for studying quantum information theory in more general contexts such as infinite-dimensional systems and quantum field theory, and adds a new research direction to quantum field theory and operator algebra theory. We discuss the violation of Bell-type inequalities in quantum networks under this model, obtaining conclusions that are completely different from those in non-relativistic scenarios. In summary, the violation of Bell-type inequalities here is determined by the structural properties of the algebra. 

However, these efforts are still just a small beginning. In the Bell nonlocality theory on models of mutually commuting von Neumann algebras, many related issues remain to be further explored. Furthermore, many concepts of quantum networks have the potential to be studied and developed within this von Neumann algebra framework. This will be promising.

\section*{Acknowledgements}
 This
work is supported by the National Natural Science
Foundation of China (Grant No. 12271394).


\end{document}